\documentclass[Journal]{IEEEtran}
\usepackage{amsbsy}
\usepackage{floatflt} 
\usepackage{xcolor}
\usepackage{url}
\usepackage{amsmath}
\usepackage{amssymb}
\usepackage{times}
\usepackage{graphicx}
\usepackage{xspace}
\usepackage{paralist} 
\usepackage{setspace} 
\usepackage{xypic}
\xyoption{curve}
\usepackage{latexsym}
\usepackage{theorem}
\usepackage{ifthen}
\usepackage{subfigure}
\usepackage{turnstile}
\usepackage{cite}
\usepackage{booktabs}

{\theoremheaderfont{\it} \theorembodyfont{\rmfamily}
\newtheorem{theorem}{Theorem}

\newtheorem{corollary}[theorem]{Corollary}

}
\begin{document}
\title{Distributed Bayesian Quickest Change Detection in Sensor Networks via Two-layer Large Deviation Analysis}
\author{
        Di~Li,~\IEEEmembership{Student Member,~IEEE,}
       Soummya~Kar,~\IEEEmembership{Member,~IEEE,}
       Fuad E. Alsaadi, \\
        Shuguang Cui,~\IEEEmembership{Fellow,~IEEE}
        \thanks{D. Li and S. Cui are with the Department of Electrical and Computer Engineering, Texas A\&M University, College Station, TX 77843 USA (e-mail: dili@tamu.edu; cui@tamu.edu).} 
        \thanks{S. Kar is with the Department of Electrical and Computer Engineering, Carnegie Mellon University, Pittsburgh, PA 15213 USA (e-mail: soummyak@andrew.cmu.edu).}
        \thanks{F. E. Alsaadi is with the Department of Electrical and Computer Engineering, King Abdulaziz University, Jeddah, 22254 Saudi Arabia (e-mail: fuad\_alsaadi@yahoo.com).}
%
}

\maketitle

\begin{abstract}
We propose a distributed Bayesian quickest change detection algorithm for sensor networks, based on a random gossip inter-sensor communication structure. Without a control or fusion center, each sensor executes its local change detection procedure in a parallel and distributed fashion, interacting with its neighbor sensors via random inter-sensor communications to propagate information. By modeling the information propagation dynamics in the network as a Markov process, two-layer large deviation analysis is presented to analyze the performance of the proposed algorithm. The first-layer analysis shows that the relation between the probability of false alarm and the conditional averaged detection delay satisfies the large deviation principle, implying that the probability of false alarm according to a rare event decays to zero at an exponentially fast rate when the conditional averaged detection decay increases, where the Kullback-Leibler information number is established as a crucial factor. The second-layer analysis shows that the probability of the rare event that not all observations are available at a sensor decays to zero at an exponentially fast rate when the averaged number of communications increases, where the large deviation upper and lower bounds for this rate are also derived, based on which we show that the performance of the distributed algorithm converges exponentially fast to that of the centralized one, by proving that the defined distributed Kullback-Leibler information number converges to the centralized Kullback-Leibler information number.
\end{abstract}
\begin{keywords}
Quickest change detection, distributed detection, large deviation, sensor networks, random gossip, Bayesian model, Markov process.
\end{keywords}

\section{Introduction}
\IEEEPARstart{Q}uickest change detection problems focus on detecting abrupt changes in stochastic processes as quickly as possible, with constraints to limit the detection error. Quickest change detection has wide applications in fields such as signal and image processing \cite{Lai-QD-cognitive,Husheng-Quickest,Trivedi-secret}, computer network intrusion detection \cite{Thottan-Anomaly,Tar-intrusion,Cardenas-Mac},
neuroscience \cite{Commenges-neuro}, environment and public health surveillance \cite{Frisen-public,Sonesson-public}, and system failure detection \cite{Rice-structure,Mainwaring}. Specifically, when quickest change detection is implemented in sensor networks \cite{VVV-Energy Efficient,Mei,Tartakovsky-quickest03}, it can detect the change of statistical features, such as the mean and variance, over the observation sequences taken by sensors. For example, quickest change detection can be implemented in sensor networks for chemical industry to monitor the leakage, or to surveille the change of temperature in the field, by detecting the change in statistical patterns.

For signal processing implementation in sensor networks, essentially it can be divided into the following two categories: centralized vs. distributed algorithms. For centralized quickest change detection algorithms \cite{Tartakovsky08asymptoticallyoptimal,VVV-decentr2001,Poor-oneshot,Mous-decentralized,LZ-decentra,Di-GlobalSIP,Ban-Efficiency-quickest}, a control or fusion center exists to process the data in a centralized way. Specifically, in centralized algorithms, they assume that either the raw observations from all the sensors or certain pre-processed information from the sensors (some people call this case as decentralized sensing) are available to the control or fusion center via certain communication channels; then a final centralized detection procedure is executed at the center. However, centralized algorithms have some disadvantages, such as heavy communication burden, high computation complexity, low scalability, and poor robustness. On the contrary, distributed implements do not require a control or fusion center, and the detection procedure is implemented at each sensor in a local and parallel fashion, with interactions among sensors in the neighborhood to exchange information. While centralized quickest change detection algorithms have been well-studied, there are fewer literatures on the study of distributed algorithms for quickest change detection problems \cite{Braca-distributedchangeconsensus,stankovic-distributedchange}, which become more desired in large-scale networks with a huge volume of data, in order to reduce the overall computation complexity and to enhance scalability. In \cite{Braca-distributedchangeconsensus}, a distributed consensus based Page's test algorithm, using cumulative sum (CUSUM) log-likelihood of the data, was proposed, with the assumption that the change happening time is deterministic but unknown, which is called a non-Bayesian setup. In \cite{stankovic-distributedchange}, a distributed change detection algorithm was proposed, to combine a global consensus scheme with the geometric moving average control charts that generate local statistics.

In both \cite{Braca-distributedchangeconsensus} and \cite{stankovic-distributedchange}, non-Bayesian setups of the change happening time are considered, where the communication stage and the observation stage are interleaved, i.e., they are at the same time scale and each is executed once within one system time slot. Under such an interleaving strategy, the convergence of the test statistic is established when the system time goes to infinity. However, this type of convergence analysis over time does not fit well into quickest change detection problems, which are time-sensitive, with the goal to detect the change as quickly as possible. This is different from traditional detection problems without much consideration of the timing issue, where the convergence analysis is commonly performed as the system time goes to infinity.

Different from the existing work, in this paper we propose a distributed change detection algorithm based on a Bayesian setup of change happening time. To the best of our knowledge, this paper is the first work discussing the distributed change detection algorithm under such a Bayesian setup. Additionally, in our proposed distributed algorithm, multiple communication steps are in between two observation instants, i.e., the communication step has a smaller time scale than that of the observation stage. In communication steps, a random point-to-point gossip based algorithm is proposed as in \cite{Di-LDKalman,Di-KalmanQuanize}. We model the information propagation procedure governed by this communication procedure as a Makov process. We then analyze the performance of the proposed distributed change detection algorithm, with a method of two-layer large deviation analysis. Large deviation techniques \cite{DEMBO-LD,Bucklew-book} have been used to analyze the performance of either centralized or distributed estimation and detection algorithms, for example, in \cite{Di-LDKalman,Baj-LD,Jak-Detection-LD,Sahu-SPRT-TSP-15}. However, no existing work has utilized the technique of large deviation analysis to study the performance of the change detection algorithms, especially the distributed change detection algorithms. The most related work is \cite{Sahu-SPRT-TSP-15}, in which a
distributed sequential detection method is proposed to solve the problem of Gaussian binary hypothesis testing. The sequential hypothesis testing problem could be considered as a special case of change detection problems, where the change happened at the initial time point \cite{Poor}.

In this paper, the first-layer large deviation analysis shows that the relation between the conditional averaged detection delay and the probability of false alarm satisfies the large deviation principle, which implies that the probability of false alarm decays exponentially fast as the conditional averaged detection delay increases. In the first-layer analysis, the nonlinear renewal theorem is adopted, by representing the stopping time with the form of a random walk crossing a constant threshold plus a nonlinear term. The second-layer analysis derives the large deviation upper and lower bounds for the probability of the rare event that not all observations are available at a sensor. Based on this, we further prove that the distributed Kullback-Leibler information number converges to the centralized Kullback-Leibler information number, by deriving the upper and lower bounds for the distributed form of Kullback-Leibler information numbers. We eventually show that the performance of the distributed algorithm converges exponentially fast to that of the centralized one when the averaged number of communications increases. In the analysis, the concept of hitting time in Markov chain is used to derive the large deviation upper and lower bounds.

The rest of the paper is organized as follows. Section~\ref{System_Setup} sets up the system model and describes the quickest change detection problem. Section~\ref{Large_Deviation_Analysis} presents the large deviation analysis in the centralized change detection algorithm to set up the background. Section~\ref{Distributed_Change_Detection} introduces the distributed change detection algorithm, and develops the corresponding two-layer large deviation analysis. Section~\ref{simulation} provides the simulation results to validate the analytical results from the previous sections.  Section~\ref{conclusion} concludes the paper.

\section{System Setup}\label{System_Setup}
Consider a network with $N$ nodes. Assume that a change happens at time $\lambda=k$. Then conditioned on $\lambda=k$, independent and identically distributed (i.i.d.) observations $X^i_1,\cdots,X^i_{k-1}$ at sensor $i$ follow a distribution with density function $f^i_0(x)$; observations $X^i_k,X^i_{k+1}\cdots$ follow another distribution with density function $f^i_1(x)$. We assume that observations at different sensors are independent of each other and the various densities are absolutely continuous with respect to the Lebesgue measure. Denote $\mathbf{X}^{i}_{n}=[X^i_1,\cdots,X^i_n]$ as observations up to time $n$ at node $i$. Let $\mathbb{P}_k$ be the probability measure of $\mathbf{X}^{i}_{n}$ when the change occurs at time $k$, and $\mathbb{E}_k$ be the corresponding expectation operator. We need to design a sequential on-line detection algorithm (with a stopping criterion) over the observation sequence to detect the change.

Consider a Bayesian setup, and assume the prior distribution for the change-point time $\lambda$ as
\[\pi_k=\mathbb{P}(\lambda=k).\]
Let $\mathbb{P}^\pi$ denote the probability measure, defined as $\mathbb{P}^\pi(\cdot)=\sum_{k=1}^{\infty} \pi_k \mathbb{P}_k(\cdot)$, and let $\mathbb{E}^\pi$ denote the expectation operator with respect to the measure $\mathbb{P}^\pi$.

The change detection problem can be converted to the hypothesis testing problem with hypotheses $``H_0: \lambda > n"$ and $``H_1: \lambda \leq n"$, i.e., to sequentially decide which hypothesis is true at each time $n$. If $H_0$ is decided, it indicates that the change hasn't happened; if $H_1$ is decided, it claims that the change has happened.
\subsection{Centralized Scheme}
First we discuss the centralized change detection algorithm, which means that observations from all sensors are available at a control center, where the detection algorithm is performed. Denote $\mathbf{X}_n=[\mathbf{X}^{1}_{n},\cdots,\mathbf{X}^{N}_{n}]$ as observations up to time $n$ from all sensors; denote the likelihood ratio for $``H_1: \lambda \leq n"$ vs. $``H_0: \lambda>n"$ averaged over the change point (see \cite{Tart-Bayesian}) as:
\begin{align}
\Lambda_n&=\frac{\mathbb{P}(\mathbf{X}_n|\lambda\leq n)\mathbb{P}(\lambda\leq n)}{\mathbb{P}(\mathbf{X}_n|\lambda> n)\mathbb{P}(\lambda> n)}\nonumber\\
&=\frac{\sum_{k=1}^n \left[\pi_k \prod_{i=1}^N \prod_{j=k}^n f^i_1(X^i_j) \prod_{j=1}^{k-1} f^i_0(X^i_j)\right]}{\sum_{k=n+1}^\infty \pi_k \prod_{i=1}^N \prod_{j=1}^n f^i_0(X^i_j)   }.
\end{align}

Assume the prior distribution is geometric \cite{Poor}, i.e.,
\[\pi_k=\rho (1-\rho)^{k-1},~\mbox{with}~\rho~\mbox{in}~(0,1).\]

Then, we have
\begin{equation}
\Lambda_n=\frac{1}{(1-\rho)^n}{\sum_{k=1}^{n}\pi_k \prod_{j=k}^{n}\prod_{i=1}^N \frac{f^i_1(X^i_j)}{f^i_0(X^i_j)}}.
\end{equation}
We further have the following recursive form as
\begin{equation}\label{test_stat}
\Lambda_n=\frac{1}{1-\rho}(\Lambda_{n-1}+\rho)\prod_{i=1}^{N}\frac{f^i_1(X^i_n)}{f^i_0(X^i_n)},
\end{equation}
with the initial state $\Lambda_0=0$. Taking logarithms on both sides, we have
\begin{equation}\label{rc_centr}
\log \Lambda_n=\log \frac{1}{1-\rho}+\log(\Lambda_{n-1}+\rho)+\sum_{i=1}^{N}\log\frac{f^i_1(X^i_n)}{f^i_0(X^i_n)}.
\end{equation}

Let $\mathcal{F}_n^X = \sigma (\mathbf{X}_n)$ be the $\sigma-$algebra generated by the observations $\mathbf{X}_n$, and we denote
\begin{equation}\label{posterior_prob}
p_n=\mathbb{P}\{\lambda \leq n | \mathcal{F}_n^X\}
\end{equation}
as the posterior probability that the change has occurred before time $n$. It follows that $\Lambda_n=p_n/(1-p_n)$.

We intend to detect the change as soon as possible, with a constraint on the detection error. Thus, the change detection problem can be formulated as the following optimization problem over certain decision rules:
\begin{align}\label{op_problem}
&\inf_{\tau\in \Delta(\alpha)} \mbox{ADD} (\tau) \nonumber\\
&\mbox{s.~t.}~\Delta(\alpha)=\{\tau:\mbox{PFA}(\tau)\leq \alpha\},
\end{align}
where the Averaged Detection Delay (ADD) is
\[\mbox{ADD}(\tau)=\mathbb{E}^\pi(\tau-\lambda|\tau\geq \lambda),\]
the Probability of False Alarm (PFA) is
\[\mbox{PFA}(\tau)=\mathbb{P}^\pi(\tau<\lambda)=\sum_{k=1}^\infty \pi_k \mathbb{P}_k(\tau<k),\]
with $\mathbb{E}^\pi$ and $\mathbb{P}^\pi$ defined at the beginning of this section, and $\alpha$ the upper limit of PFA.

The optimal solution to this problem is given by the Shiryaev test (see \cite{Shiryaev-1963,Shiryaev-1978}), where the detection strategy corresponds to claiming a change when the likelihood ratio $\Lambda_n$ exceeds a threshold, i.e., the optimal stopping time $\tau^*$ is
\begin{equation}\label{opt_test}
\tau^*(A)=\inf\{n\geq 1: \Lambda_n\geq A\},
\end{equation}
where $A$ is chosen such that $\mbox{PFA}(\tau^*(A))=\alpha$. It is difficult to set a threshold $A$ exactly matching the above condition. We could set $A=(1-\alpha)/\alpha$ guaranteeing that $\mbox{PFA}(\tau^*(A)) \leq \alpha$, which is due to the fact that $\mathbb{P}^\pi(\tau^*(A)<\lambda) = \mathbb{E}^\pi \left(1-p_{\tau^*(A)}\right)$ and $1-p_{\tau^*(A)} \leq 1/(1+A)$ with $p_{\tau^*(A)}$ defined in \eqref{posterior_prob}, such that $\mbox{PFA}(\tau^*(A)) \leq 1/(1+A)$. Therefore, setting $A=(1-\alpha)/\alpha$ guarantees $\mbox{PFA}(\tau^*(A)) \leq \alpha$ \cite{Tart-Bayesian}.

\subsection{Isolated Scheme}
If there is no control center and each sensor implements the local change detection algorithm purely based on its own observations, the log-likelihood ratio for hypotheses $``H_0: \lambda \leq n"$ vs. $``H_1: \lambda>n"$ of sensor $i$ at time $n$ is derived as
\begin{equation}\label{rc_isolated}
\log \Lambda^i_n=\log \frac{1}{1-\rho}+\log(\Lambda^i_{n-1}+\rho)+\log\frac{f^i_1(X^i_n)}{f^i_0(X^i_n)},
\end{equation}
with the initial state $\Lambda^i_0=0$.

Then, to solve the optimization problem in \eqref{op_problem} at sensor $i$, the Shiryaev test with test statistic in \eqref{rc_isolated} is the optimal solution \cite{Shiryaev-1963,Shiryaev-1978}, with the optimal stopping time ${\tau^i}^*$ at sensor $i$ as
\begin{equation}\label{opt_test_is}
{\tau^i}^*(A)=\inf\{n\geq 1: \Lambda^i_n\geq A\},
\end{equation}
where $A$ is chosen such that $\mbox{PFA}({\tau^i}^*(A))=\alpha$. Since this detection strategy is exclusively based on local observations at each sensor, it is called the isolated scheme.

Intuitively, the larger the difference between densities $f^i_1(x)$ and $f^i_0(x)$ is, the faster the change can be detected. To quantify the difference between densities $f^i_1(x)$ and $f^i_0(x)$, the Kullback-Leibler information number is defined as
\begin{equation}\label{KL_is}
D(f^i_1,f^i_0)=\int \log \left\{\frac{f^i_1(x)}{f^i_0(x)}\right\} f^i_1(x) dx,
\end{equation}
 which is also called divergence or KL distance between densities $f^i_1(x)$ and $f^i_0(x)$. We assume a mild condition that $0<D(f^i_1,f^i_0)<\infty$ and $0<D(f^i_0,f^i_1)<\infty$, for each $i$.

In the sequel, we will show that the Kullback-Leibler information number is a crucial factor in analyzing the performance of the change detection algorithms.

\section{Large Deviation Analysis for Centralized and Isolated Algorithms}\label{Large_Deviation_Analysis}

Large deviation studies the asymptotic behavior of a rare event. Generally, for a rare event satisfying the large deviation principle, the probability of this rare event occurring decays to zero at an exponentially fast rate in the asymptotic sense over certain quantity. In this section, we analyze the performance of the centralized algorithm, by quantifying the relation between the conditional ADD and the PFA via large deviation analysis, showing that the event of false alarm can be considered as a rare event and the corresponding PFA decays to zero exponentially fast, when the conditional ADD increases. The results in this section will set the background for analyzing the distributed case in the next section.

Since ADD might be difficult to characterize, following \cite{Tart-Bayesian}, we instead analyze the conditional ADD (CADD). The CADD is defined as $\mbox{CADD}_{k}(\tau)=\mathbb{E}_{k}(\tau-k|\tau\geq k),~k=1,2,\cdots$. The relation between ADD and CADD is described as follows:
\begin{align}
\mbox{ADD}(\tau)&=\mathbb{E}^\pi(\tau-\lambda|\tau\geq \lambda)\nonumber\\
&=\frac{\sum_{k=1}^{\infty} \pi_k \mathbb{P}_k(\tau \geq k) \mathbb{E}_k(\tau-k|\tau\geq k)}{\mathbb{P}^\pi\{\tau\geq\lambda\}} \nonumber\\
&=\frac{\sum_{k=1}^{\infty} \pi_k \mathbb{P}_k(\tau \geq k) \mbox{CADD}_{k}(\tau)  }{\mathbb{P}^\pi\{\tau\geq\lambda\}}.
\end{align}

According to the optimal stopping rule \eqref{opt_test} and the test statistic \eqref{test_stat}, we find $\mbox{CADD}_1(\tau^*)\geq \mbox{CADD}_k(\tau^*)$, for $k\geq 2$, which is explained as follows. For $k=1$ (which means that the change happens at time $1$), by investigating \eqref{test_stat}, $\Lambda_1$ is updated based on the initial state $\Lambda_0=0$. For $k\geq 2$, by investigating \eqref{test_stat}, $\Lambda_k$ is updated based on $\Lambda_{k-1}$, where $0\leq\Lambda_{k-1}< A$ according to the optimal stopping rule \eqref{opt_test} and the condition $\tau^*\geq k$. Thus, we have $\Lambda_{k-1}\geq \Lambda_0$. According to the optimal stopping rule \eqref{opt_test}, the spent time of crossing the threshold after the change happens (detection delay) in the case of $k\geq 2$ is less than that in the case of $k=1$ on average. Therefore, we have $\mbox{CADD}_1(\tau^*)\geq \mbox{CADD}_k(\tau^*)$. Additionally, the difference between $\mbox{CADD}_1(\tau^*)$ and $\mbox{CADD}_k(\tau^*)$ could be treated as a constant for large $A$, which approximately equals $\mathbb{E}_{\infty}(\log \Lambda_{k-1}), k\geq 2$ \cite{Tart-Bayesian}. Therefore, in the sequel, we focus on the use of $\mbox{CADD}_1(\tau^*)$, which could be also considered as the worst-case study.

The relation between $\mbox{CADD}_1(\tau^*)$ and $\mbox{PFA}(\tau^*)$, for the centralized scheme, is presented in the following theorem.
\begin{theorem}\label{Thm_cen}
The probability of false alarm ($\mbox{PFA}(\tau^*)$), with the optimal stopping rule \eqref{opt_test}, satisfies the large deviation principle, in the asymptotic sense with respect to the increasing conditional ADD ($\mbox{CADD}_1(\tau^*)$), i.e.,
\begin{align}
\lim_{\mbox{CADD}_1(\tau^*)\rightarrow \infty} & \frac{1}{\mbox{CADD}_1(\tau^*)} \log [\mbox{PFA}(\tau^*)]\nonumber\\
&=-( \mathcal{D}+|\log(1-\rho)|),
\end{align}
where $\mathcal{D}$ is the sum of the Kullback-Leibler information numbers across all sensors, i.e., $\mathcal{D}=\sum_{i=1}^N D (f_1^i,f_0^i)$, and $\mathcal{D}+|\log(1-\rho)|$ is the large deviation decay rate, quantifying how fast the probability of false alarm decays to zero over the increasing conditional ADD.
\end{theorem}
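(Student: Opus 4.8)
The plan is to exploit the recursive representation \eqref{rc_centr} of $\log\Lambda_n$ to decompose it as a random walk with positive drift plus an asymptotically negligible nonlinear perturbation, then to invoke nonlinear renewal theory to tie the threshold $A$ to the conditional detection delay, while separately tying $A$ to the false-alarm probability. Concretely, I would first set $Y_j=\sum_{i=1}^{N}\log\{f^i_1(X^i_j)/f^i_0(X^i_j)\}+|\log(1-\rho)|$, which under $\mathbb{P}_1$ (change at the first sample) are i.i.d.\ with mean $\mu=\mathcal{D}+|\log(1-\rho)|$ by the definition \eqref{KL_is}. Writing $\log(\Lambda_{n-1}+\rho)=\log\Lambda_{n-1}+\log(1+\rho/\Lambda_{n-1})$ in \eqref{rc_centr} gives
\[
\log\Lambda_n=S_n+\xi_n,\qquad S_n=\sum_{j=1}^{n}Y_j,\qquad \xi_n=\sum_{j=1}^{n}\log\!\left(1+\frac{\rho}{\Lambda_{j-1}}\right).
\]
Since $\mu>0$, the walk $S_n$ drifts to $+\infty$, so $\Lambda_n\to\infty$ a.s.\ under $\mathbb{P}_1$; consequently the increments of $\xi_n$ tend to zero and $\xi_n$ converges to a finite random limit. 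This is exactly the slowly changing structure required for the nonlinear renewal theorem.

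Next I would apply that theorem to the stopping time $\tau^*(A)=\inf\{n\geq 1:\log\Lambda_n\geq\log A\}$ from \eqref{opt_test}. Because the excess $\log\Lambda_{\tau^*}-\log A$ has a limiting distribution and the nonlinear term $\xi_n$ is slowly changing, the theorem yields $\mathbb{E}_1[\tau^*]=(\log A)/\mu+O(1)$, so that, using $\mbox{CADD}_1(\tau^*)=\mathbb{E}_1[\tau^*]-1$ (the conditioning $\tau^*\geq 1$ being vacuous),
\[
\lim_{A\to\infty}\frac{\mbox{CADD}_1(\tau^*)}{\log A}=\frac{1}{\mu}=\frac{1}{\mathcal{D}+|\log(1-\rho)|}.
\]

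Separately, I would relate $A$ to the false-alarm probability. Using $1-p_{\tau^*}=1/(1+\Lambda_{\tau^*})$ and $\Lambda_{\tau^*}\geq A$, the bound $\mbox{PFA}(\tau^*)=\mathbb{E}^\pi(1-p_{\tau^*})\leq 1/(1+A)$ already furnishes one direction; for the reverse direction I would argue that the overshoot keeps $\Lambda_{\tau^*}$ of order $A$ (its limiting distribution is bounded in probability), whence $\mathbb{E}^\pi[1/(1+\Lambda_{\tau^*})]\asymp 1/A$ and therefore $\log[\mbox{PFA}(\tau^*)]=-\log A+O(1)$. Dividing this by the displayed delay asymptotics and letting $A\to\infty$ (equivalently $\mbox{CADD}_1(\tau^*)\to\infty$) then gives the claimed limit $-(\mathcal{D}+|\log(1-\rho)|)$.

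The hard part will be the rigorous verification of the hypotheses of the nonlinear renewal theorem, in particular that $\xi_n$ is \emph{uniformly} slowly changing and that the overshoot distribution is controlled well enough that the $O(1)$ remainders in both $\mathbb{E}_1[\tau^*]$ and $\log[\mbox{PFA}(\tau^*)]$ are genuinely bounded and do not contaminate the leading-order $\log A$ term. The other delicate point is the two-sided matching for the PFA: a one-sided inequality of the form $\mbox{PFA}(\tau^*)\leq 1/(1+A)$ alone only bounds the limit, so establishing a lower bound of the same exponential order is essential to pin down the exact decay rate.
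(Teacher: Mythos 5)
Your proposal is correct in substance, but it takes a more self-contained route than the paper. The paper's own proof of this theorem is essentially algebra on top of a citation: it recalls Theorem~5 of \cite{Tart-Bayesian}, which supplies the two asymptotic expansions $\mbox{PFA}(\tau^*)=\zeta(\rho,\mathcal{D})A^{-1}(1+o(1))$ and $\mathbb{E}_1(\tau^*)=(\mathcal{D}+|\log(1-\rho)|)^{-1}\left[\log (A/\rho)-\xi(\rho,\mathcal{D})\right]+o(1)$ as $A\to\infty$, then combines them, divides by $\mbox{CADD}_1(\tau^*)=\mathbb{E}_1(\tau^*)-1$, and takes the limit. What you do instead is re-derive, in outline, the content of that cited theorem: the decomposition of $\log\Lambda_n$ into a random walk with drift $\mathcal{D}+|\log(1-\rho)|$ plus a slowly changing term, the nonlinear renewal theorem for $\mathbb{E}_1[\tau^*]$, and two-sided overshoot control giving $\log[\mbox{PFA}(\tau^*)]=-\log A+O(1)$. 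This is exactly the machinery behind the citation, and it is also the route the paper itself takes when proving the distributed analogue (Theorem~\ref{thm_distributed}), where the conditions of the nonlinear renewal theorems in \cite{Woodroofe} are checked explicitly. Your approach buys self-containedness at the cost of the verification work you defer (the slowly-changing hypotheses and the moment conditions behind the $O(1)$ remainders), which the paper sidesteps by citing. Your insistence that the one-sided bound $\mbox{PFA}(\tau^*)\leq 1/(1+A)$ cannot by itself pin down the rate, and that a matching lower bound of order $1/A$ is needed, is exactly right; that is what the expansion $\zeta(\rho,\mathcal{D})/A$ encodes.

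One technical repair: your nonlinear term $\xi_n=\sum_{j=1}^n\log\left(1+\rho/\Lambda_{j-1}\right)$ is ill-defined at $j=1$ because $\Lambda_0=0$. The first step of the recursion \eqref{rc_centr} gives $\log\Lambda_1=Y_1+\log\rho$, so the clean decomposition is
\begin{equation*}
\log\Lambda_n=\log\rho+S_n+\sum_{j=2}^{n}\log\left(1+\frac{\rho}{\Lambda_{j-1}}\right),
\end{equation*}
with the threshold correspondingly shifted from $\log A$ to $\log(A/\rho)$; this is precisely the form \eqref{random_walk}--\eqref{l_n} that the paper adopts in the distributed setting. With that fix, your argument goes through.
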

\begin{proof}
Recall Theorem~5 in \cite{Tart-Bayesian}, which establishes the following results:
\begin{align}
\mbox{PFA}(\tau^*)&=\frac{\zeta(\rho,\mathcal{D})}{A}(1+o(1)),~\mbox{as}~A \rightarrow \infty; \label{thm5_1}\\
\mathbb{E}_1(\tau^*)&=\frac{1}{\mathcal{D}+|\log(1-\rho)|}\left[\log \frac{A}{\rho}-\xi(\rho,\mathcal{D})\right]+o(1),\nonumber\\
&~~~~~~~~~~~~~~~~~~~~~~~~~~~\mbox{as}~A \rightarrow \infty,\label{thm5_2}
\end{align}
where $\mathcal{D}=\sum_{i=1}^N D (f_1^i,f_0^i)$, and both $\zeta(\rho,\mathcal{D})$ and $\xi(\rho,\mathcal{D})$ are functions of $\rho$ and $\mathcal{D}$. Since $\rho$ and $\mathcal{D}$ are constants once the system parameters are set, $\zeta(\rho,\mathcal{D})$ and $\xi(\rho,\mathcal{D})$ are also system constants.

Since $\mbox{CADD}_1(\tau^*)=\mathbb{E}_1(\tau^*-1)=\mathbb{E}_1(\tau^*)-1$, by combining \eqref{thm5_1} and \eqref{thm5_2}, we have
\begin{align}\label{thm1_eq1}
&\log\frac{\mbox{PFA}(\tau^*)\rho}{\zeta(\rho,\mathcal{D})(1+o(1))}=-\mbox{CADD}_1(\tau^*)({\mathcal{D}+|\log(1-\rho)|})\nonumber \\
&-\xi(\rho,\mathcal{D})+(o(1)-1)({\mathcal{D}+|\log(1-\rho)|}).
\end{align}
Then, after dividing the left-hand and right-hand sides of \eqref{thm1_eq1} by $\mbox{CADD}_1(\tau^*)$ and taking the limit as $\mbox{CADD}_1(\tau^*) \rightarrow \infty$, we have
\begin{align}
\lim_{\mbox{CADD}_1(\tau^*) \rightarrow \infty} & \frac{1}{\mbox{CADD}_1(\tau^*)} \log \mbox{PFA}(\tau^*)\nonumber\\
&= -({\mathcal{D}+|\log(1-\rho)|}).
\end{align}
\end{proof}

The above theorem quantifies the tradeoff between two performance metrics: PFA and $\mbox{CADD}_1$, in the defined change detection problems, i.e., as $\mbox{CADD}_1$ increases, PFA decays to zero exponentially fast and the decay rate is ${\mathcal{D}+|\log(1-\rho)|}$.

For the isolated scheme, at each node $i$, the relation between $\mbox{PFA}({\tau^i}^*)$ and $\mbox{CADD}_1({\tau^i}^*)$ has a similar format to that in the centralized case shown in Theorem \ref{Thm_cen}. We give the following corollary.
\begin{corollary}\label{Coro_iso}
The probability of false alarm ($\mbox{PFA}({\tau^i}^*)$), with the optimal stopping rule \eqref{opt_test_is}, satisfies the large deviation principle, in the asymptotic sense with respect to the increasing conditional ADD ($\mbox{CADD}_1({\tau^i}^*)$), i.e.,
\begin{align}
\lim_{\mbox{CADD}_1({\tau^i}^*)\rightarrow \infty} &\frac{1}{\mbox{CADD}_1({\tau^i}^*)} \log [\mbox{PFA}({\tau^i}^*)]\nonumber\\
&=-({D}(f_1^i,f_0^i)+|\log(1-\rho)|),
\end{align}
which implies that the large deviation decay rate of the PFA is ${D}(f_1^i,f_0^i)+|\log(1-\rho)|$.
\end{corollary}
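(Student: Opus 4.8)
The plan is to recognize that the isolated scheme at sensor $i$ is nothing more than the centralized scheme of Theorem~\ref{Thm_cen} applied to a fictitious single-sensor network whose pre- and post-change densities are $f_0^i$ and $f_1^i$. Comparing the isolated recursion \eqref{rc_isolated} with the centralized recursion \eqref{rc_centr}, the sole difference is that the aggregate increment $\sum_{j=1}^{N}\log(f_1^j(X_n^j)/f_0^j(X_n^j))$ is replaced by the single term $\log(f_1^i(X_n^i)/f_0^i(X_n^i))$; the stopping rule \eqref{opt_test_is}, the threshold calibration, and the definitions of $\mbox{PFA}$ and $\mbox{CADD}_1$ all transfer verbatim under this identification. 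Hence the whole machinery behind Theorem~\ref{Thm_cen} becomes available once we substitute $\mathcal{D}=\sum_{j=1}^N D(f_1^j,f_0^j)$ by the single-sensor information number $D(f_1^i,f_0^i)$.

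Concretely, I would first invoke Theorem~5 of \cite{Tart-Bayesian} under this specialization to obtain
\begin{align}\label{coro_eq}
\mbox{PFA}({\tau^i}^*) &=\frac{\zeta(\rho,D(f_1^i,f_0^i))}{A}(1+o(1)),\nonumber\\
\mathbb{E}_1({\tau^i}^*) &=\frac{\log(A/\rho)-\xi(\rho,D(f_1^i,f_0^i))}{D(f_1^i,f_0^i)+|\log(1-\rho)|}+o(1),
\end{align}
as $A\to\infty$, where $\zeta$ and $\xi$ remain finite system constants once $\rho$ and $D(f_1^i,f_0^i)$ are fixed. I would then reproduce the algebra of the proof of Theorem~\ref{Thm_cen} line for line: eliminate $A$ between the two relations in \eqref{coro_eq}, take logarithms, divide through by $\mbox{CADD}_1({\tau^i}^*)=\mathbb{E}_1({\tau^i}^*)-1$, and let $\mbox{CADD}_1({\tau^i}^*)\to\infty$. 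Because every constant and $o(1)$ term is annihilated upon division by the diverging $\mbox{CADD}_1({\tau^i}^*)$, the limit collapses to $-(D(f_1^i,f_0^i)+|\log(1-\rho)|)$, which is exactly the claim of Corollary~\ref{Coro_iso}.

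The only point that genuinely needs checking---and the closest thing to an obstacle---is the legitimacy of the single-sensor reduction, i.e.\ that the regularity hypotheses of Theorem~5 in \cite{Tart-Bayesian} are met at each individual node. These amount to the finiteness and positivity of the relevant Kullback-Leibler number, which is precisely the mild condition $0<D(f_1^i,f_0^i)<\infty$ imposed after \eqref{KL_is}. Once this is confirmed, no separate large-deviation argument is required: the corollary is obtained by specialization rather than by a fresh derivation, and the structural parallel with Theorem~\ref{Thm_cen} makes the exponential-decay interpretation immediate.
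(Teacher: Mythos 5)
Your proposal is correct and coincides with the paper's own (implicit) argument: the paper states Corollary~\ref{Coro_iso} without a separate proof precisely because it is the single-sensor specialization of Theorem~\ref{Thm_cen}, obtained by invoking Theorem~5 of \cite{Tart-Bayesian} with $\mathcal{D}$ replaced by $D(f_1^i,f_0^i)$ and repeating the same elimination-of-$A$ algebra. Your additional check that the condition $0<D(f_1^i,f_0^i)<\infty$ licenses this specialization is exactly the hypothesis the paper imposes after \eqref{KL_is}, so nothing is missing.
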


Theorem~\ref{Thm_cen} and Corollary~\ref{Coro_iso} imply that the Kullback-Leibler information number is a crucial factor that determines the performance of change detection algorithms. Specifically, Corollary~\ref{Coro_iso} shows that, for different sensors with different pairs of densities $f^i_1(x)$ and $f^i_0(x)$, the sensor associated with a density pair bearing a larger Kullback-Leibler information number asymptotically leads to a smaller PFA, under the same CADD performance. Compared with the isolated scheme, Theorem~\ref{Thm_cen} shows that, in the centralized scheme, the sum of Kullback-Leibler information numbers $\mathcal{D}$ is used to quantify the relation between PFA and CADD, which can be intuitively explained as follows. 

In the next section, we propose a distributed change detection scheme and analyze its performance. Due to the information propagation among sensors, we show that the distributed scheme will outperform the isolated one, and the outperforming is reflected by the averaged partial sum over individual Kullback-Leibler information numbers.

\section{Distributed Change Detection and Large Deviation Analysis}\label{Distributed_Change_Detection}
In this section, a random gossip based distributed change detection algorithm is first introduced. Then, we model the information propagation in this distributed scheme as a Markov process. Finally, two-layer large deviation analysis is presented to analyze the performance of the proposed distributed algorithm.

First, we interpret the network as a non-directed graph $\mathcal{G}=(\mathcal{V},\mathcal{E})$, where $\mathcal{V}$ is the set of nodes with $|\mathcal{V}|=N$ and $\mathcal{E}$ is the set of edges. If node $i$ is connected to node $j$, then we have that edge $(i,j)\in \mathcal{E}$. The connection in graph $\mathcal{G}$ is represented by the following $N \times N$ symmetric adjacency matrix $\mathcal{A}$ with each element $\mathcal{A}_{ij}$ as:
\begin{equation}
\mathcal{A}_{ij}=
\left\{ \begin{array}{l}
 1,~~ (i,j)\in \mathcal{E}~\mbox{or}~i=j, \\
 0,~~\mbox{otherwise}.
 \end{array} \right.
\end{equation}
We assume that the network is connected, i.e., each node has a path to any other node.

\subsection{Distributed Algorithm}
We propose a random gossip based distributed change detection algorithm, where a random gossip algorithm, as the inter-sensor communication structure, is used to propagate information among sensors within the neighborhood.

Communication among sensors is constrained by factors such as proximity, transmitting power, and receiving capabilities. We model the communication structure in terms of the non-directed graph $\mathcal{G}=(\mathcal{V},\mathcal{E})$, which is defined at the beginning of this section. If node $i$ can communicate with node $j$, there is an edge existing between $i$ and $j$, i.e., the set of edges $\mathcal{E}$ contains the edge $(i,j)$. We assume that the diagonal elements in adjacency matrix $\mathcal{A}$ are identically 1, which indicates that a node can always communicate with itself. The set $\mathcal{E}$ is the maximal set of allowable communication links in the network at any time; however, at a particular instant, only a fraction of the allowable communication links are active, for example, to avoid strong interference among communications. The exact communication protocol is not that important for the theoretical analysis, as long as the connectivity of network is satisfied. For definiteness, we assume the following generic communication model, which subsumes the widely used gossip protocol for real-time embedded architectures \cite{Boyd-Gossip} and the graph matching based communication protocols for internet architectures \cite{Mckeown}. Define the set $\mathcal{M}$ of binary symmetric $N \times N$ matrices as follows: \begin{equation}\label{setM}
\mathcal{M}=\left\{ \mathbf{A}| \mathbf{1}^T \mathbf{A}=\mathbf{1}^T,~\mathbf{A}\mathbf{1}=\mathbf{1},~\mathbf{A} \leq \mathcal{A}\right\}
\end{equation}
where $\mathbf{A}\leq \mathcal{A}$ is interpreted as component-wise. In other words, $\mathcal{M}$ is the set of adjacency matrices, where each node is incident to exactly one edge, which is included in the edge set $\mathcal{E}$. Let $\mathfrak{D}$ denote a probability distribution on the space $\mathcal{M}$. We define a sequence of time-varying matrices $\mathbf{A}(m)$, $m=1,2,\cdots$, as an independent and identically distributed sequence in $\mathcal{M}$ with distribution $\mathfrak{D}$. Define the averaged matrix $\bar{\mathbf{A}}$ as
\begin{equation}\label{bar_A}
 \bar{\mathbf{A}}=\int_{\mathcal{M}} \mathbf{A} d \mathfrak{D}(\mathbf{A}).
 \end{equation}

According to the definition of $\mathcal{M}$ in \eqref{setM}, $\bar{\mathbf{A}}$ is a symmetric stochastic matrix. We assume $\bar{\mathbf{A}}$ to be irreducible and aperiodic. This assumption depends on the allowable edges $\mathcal{E}$ and the distribution $\mathfrak{D}$. Such a distribution $\mathfrak{D}$ making this assumption valid always exists if the graph $(\mathcal{V},\mathcal{E})$ is connected, e.g., the uniform distribution. In addition, $\bar{\mathbf{A}}$ could be interpreted as the transition matrix of a Markov chain, which we will discuss later.

Assume that the sampling time interval for taking observations is $\Delta$, within which there are $M$ rounds of inter-sensor communications, where $M$ is a Poisson random variable with mean $\gamma$ \cite{Boyd-Gossip}. At the $m$-th ($m\in\{1,\cdots,M\}$) round, a node randomly selects another node from its neighborhood to construct a two-way communication pair to exchange the observations between them. At each sampling time interval, this communication structure is modeled by the sequence of matrices $\mathbf{A}(m)$, $m=1,2,\cdots,M$, i.e., the establishment of a communication link between node $i$ and node $j$ indicates that nodes $i$ and $j$ are neighbors with respect to the time varying adjacency matrix $\mathbf{A}(m)$. Note that there may exist multiple communication links or pairs simultaneously in the network, but only one communication link is associated with one given node in each round, which is also implied by the mathematical model in \eqref{setM}.

Now we model the communication link formation process from the perspective of Markov process. To this end, the communication link process governed by the time varying adjacency matrix sequence $\{\mathbf{A}(m)\}$ can be represented by $N$ particles traveling on the graph \cite{Di-LDKalman}. We denote the state of the $i$-th particle as $z_i(m)$, where $z_i(m)$ indicates the index of node that the $i$-th particle travels to at time $m$, with $z_i(m) \in \{1,\cdots,N\}$. The evolution of the $i$-th particle is given as follows:
\begin{equation}
z_i(m)=[z_i(m-1)]^{\rightarrow}_m,~z_i(0)=i,
\end{equation}
where the notation $[i]^{\rightarrow}_m$ denotes the neighbor of node $i$ at time $m$ with respect to the adjacency matrix $\mathbf{A}(m)$, i.e., a communication link is established between $i^{\rightarrow}_m$ and $i$ at time $m$. Thus, the travelling process of the $i$-th particle can be viewed as originating from node $i$ initially and then traveling on the graph according to the link formation process $\{\mathbf{A}(m)\}$ (possibly changing its location at each step). For each $i$, the process $\{z_i(m)\}$ is a Markov chain on the state space $\{1,\cdots,N\}$ with the transition probability matrix $\bar{\mathbf{A}}$ \cite{Di-LDKalman}.

After $M$ rounds of inter-sensor communications, each node accumulates some observations from other nodes, with which the local test statistic at each node is updated. Denote $O^i_n$ as the set of nodes whose observations are available at node $i$ after inter-sensor communications at the end of the observation time period $n$. We will describe the accumulation process to obtain $O^i_n$ later. Then, the distributed test statistic $\Lambda^i_{n,D}$ is updated as
\begin{equation}\label{test_stat_dist}
\Lambda^i_{n,D}=\frac{1}{1-\rho}(\Lambda^i_{n-1}+\rho)\prod_{j\in O_n^i}\frac{f^j_1(X^j_n)}{f^j_0(X^j_n)}.
\end{equation}

With this test statistic updating rule, at each sensor $i$, the distributed change detection scheme is executed with the following stopping time ${\tau_D^i}$:
\begin{equation}\label{opt_test_dist}
{\tau_D^i}(A)=\inf\{n\geq 1: \Lambda^i_{n,D}\geq A\},
\end{equation}
where $A$ is chosen as $A=(1-\alpha)/\alpha$ such that $\mbox{PFA}({\tau_D^i}(A)) \leq \alpha$.

Now we describe the observation accumulation process to obtain $O^i_n$.
Let $\mathbf{s}^m_n=[s^m_n(1),\cdots,s^m_n(N)]$, with element $s^m_n(i)\in \{1,\cdots,N\}$ indexing the observation $X_n^{s^m_n(i)}$ at sensor $i$ just after the $m$-th round of communication in the observation time period $n$. The initial state is $s^0_n(i)=i$ at each sensor $i$, which means that at the beginning of the time slot $n$, each sensor $i$ only has its own observation $X_n^i$. When the communication starts, by following the communication model $\mathbf{A}(m)$, the observations $\{X_n^i\}_{i\in\{1,\cdots,N\}}$ travel across the network in the following way:
\begin{equation}
\mathbf{s}^m_n=\mathbf{A}(m)\mathbf{s}^{m-1}_n,~1\leq m \leq M.
\end{equation}
During these $M$ rounds of inter-sensor communications until the end of the time period $n$, each sensor stores observations exchanged from other sensors. Then, at the end of the time period $n$, observations from other sensors are accumulated at sensor $i$, and the set of sensors whose observations are available at sensor $i$ is denoted by
\begin{equation}
O_n^i=\bigcup_{m=0}^{M}\{s^m_n(i)\}.
\end{equation}

This observation accumulation process terminates at the end of the time period $n$. Then, a similar observation accumulation process repeats during the time period $n+1$, which is independent of the previous process. Therefore, the sequence $\{O_n^i\}$, as the set denoting observation indices which are available at sensor $i$ at the end of the $n$-th period, is an i.i.d. process.

To better describe our work in the sequel, we introduce some notations here. Let $\Psi$ denote the power set of node indices $\{1,\cdots,N\}$, where elements of $\Psi$ are indexed by $\nu$, with $\nu \in \{0,1,\cdots, 2^N-1\}$. We use $\Psi_0$ to denote the null set and $\Psi_{2^N-1}$ to denote the whole set of node indices. For technical convenience, we interpret sensors in the set $\Psi_{\nu}$ indexed by $\nu$ to be arranged in an ascending order with $j_1$ denoting the first one and $j_{|\Psi_\nu|}$ denoting the last one, i.e., $\Psi_{\nu}=\{j_1,\cdots,j_{|\Psi_\nu|}\}$. Therefore, the set $O_n^i$, denoting nodes whose observations are available at node $i$ after the observation accumulation process, is a random variable taking values from $\Psi$. We denote the following probability as
\begin{equation}\label{q_def}
\mbox{Pr}(O_n^i=\Psi_{\nu})=q^i_n(\nu),~\nu \in \{0,1,\cdots, 2^N-1\}.
\end{equation}

\subsection{First-layer Large Deviation Analysis}

To perform large deviation analysis, we first need to interpret the stopping time ${\tau^i_D}(A)$ as a form of random walk crossing a threshold plus a nonlinear term \cite{Tart-Bayesian}. To this end, the stopping time ${\tau^i_D}(A)$ could be rewritten as:
\begin{equation}\label{random_walk}
\tau^i_D(A)=\inf\left\{n\geq 1: W_n(\rho)+{l}_n\ \geq \log(A/\rho)\right\},
\end{equation}
where $W_n(\rho)=Z_n+n|\log(1-\rho)|$ is a random walk with
\begin{align}\label{Z_n}
Z_n=\sum_{k=1}^n \sum_{j\in O_k^i} \log \frac{f_0^j(X^j_k)}{f_1^j(X^j_k)},
\end{align}
and
\begin{equation}\label{l_n}
l_n=\log \left\{1+\sum_{k=1}^{n-1} (1-\rho)^k \prod_{s=1}^k \prod_{j\in O_s^i} \frac{f_0^j(X^j_s)}{f_1^j(X^j_s)}\right\}.
\end{equation}
Specifically, $W_n(\rho)$ is a random walk with mean
\begin{equation}\label{W_n_mean}
\mathbb{E}_1\{W_n(\rho)\}=n\sum_{\nu=1}^{2^N-1}\bar{q}^i_\gamma(\nu)\sum_{j\in {\Psi}_\nu}D(f^j_1,f^j_0)+n|\log(1-\rho)|,
\end{equation}
where $\bar{q}^i_\gamma(\nu)$ is the probability defined as
\begin{equation}\label{q_def1}
\bar{q}^i_\gamma(\nu)=\mbox{Pr}(O_\gamma^i=\Psi_{\nu}),~\nu \in \{0,1,\cdots, 2^N-1\},
\end{equation}
in which $O_\gamma^i$, a random variable taking values from $\Psi$, denotes the set of nodes whose observations are available at node $i$ after $\gamma$ rounds of communications, and $\gamma$ is the mean value of the number of communication rounds. Then, based on the above random walk interpretation for the stopping time, we have the following theorem regarding the relation between PFA and CADD in the proposed distributed change detection scheme.
\begin{theorem}\label{thm_distributed}
The probability of false alarm ($\mbox{PFA}(\tau_D^i)$), with the stopping rule \eqref{opt_test_dist} in the distributed change detection algorithm with the parameter $\gamma$ as the averaged number of inter-sensor communications, satisfies the large deviation principle in the asymptotic sense with respect to increasing conditional ADD ($\mbox{CADD}_1(\tau_D^i)$), i.e.,
\begin{align}\label{eq_thm3}
\lim_{\mbox{CADD}_1(\tau_D^i)\rightarrow \infty} & \frac{1}{\mbox{CADD}_1(\tau_D^i)} \log [\mbox{PFA}(\tau_D^i)] \nonumber\\
&=-( \mathcal{D}^i_\gamma+|\log(1-\rho)|),
\end{align}
where $\mathcal{D}^i_\gamma=\sum_{\nu=1}^{2^N-1}\bar{q}^i_\gamma(\nu)\sum_{j\in {\Psi}_\nu}D(f^j_1,f^j_0)$, and $\mathcal{D}^i_\gamma+|\log(1-\rho)|$ is the large deviation decay rate of PFA. We call $\mathcal{D}^i_\gamma$ as the distributed Kullback-Leibler information number.
\end{theorem}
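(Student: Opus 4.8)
The plan is to reduce the distributed statement to the centralized one by showing that the stopping time $\tau_D^i(A)$, once written in the random-walk-plus-nonlinear-term form \eqref{random_walk}, is governed by the same renewal-theoretic machinery (Theorem~5 of \cite{Tart-Bayesian}) that underlies \eqref{thm5_1}--\eqref{thm5_2} in the proof of Theorem~\ref{Thm_cen}, the only change being that the per-step drift $\mathcal{D}$ is replaced by $\mathcal{D}^i_\gamma$. Concretely, I would first establish the two distributed asymptotics
\begin{align}
\mbox{PFA}(\tau_D^i)&=\frac{\zeta^i_\gamma}{A}(1+o(1)),\label{pp_pfa}\\
\mathbb{E}_1(\tau_D^i)&=\frac{1}{\mathcal{D}^i_\gamma+|\log(1-\rho)|}\left[\log\frac{A}{\rho}-\xi^i_\gamma\right]+o(1)\label{pp_add}
\end{align}
as $A\rightarrow\infty$, for appropriate system constants $\zeta^i_\gamma$ and $\xi^i_\gamma$; the stated limit then follows by repeating the algebra in the proof of Theorem~\ref{Thm_cen}, dividing by $\mbox{CADD}_1(\tau_D^i)=\mathbb{E}_1(\tau_D^i)-1$ and letting $\mbox{CADD}_1(\tau_D^i)\rightarrow\infty$.

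The core of the argument is to verify that $W_n(\rho)=Z_n+n|\log(1-\rho)|$ from \eqref{Z_n} is a genuine random walk with independent and identically distributed increments and to identify its drift. I would argue that the increments are i.i.d.\ across $k$ because the observations are i.i.d.\ in time, while, as established in the model, the index sets $\{O_k^i\}$ form an i.i.d.\ process that is independent of the observations (it is driven by the communication matrices $\mathbf{A}(m)$). Conditioning on the event $\{O_k^i=\Psi_\nu\}$, which carries probability $\bar{q}^i_\gamma(\nu)$ as in \eqref{q_def1}, the $k$-th increment has conditional mean $\sum_{j\in\Psi_\nu}D(f^j_1,f^j_0)+|\log(1-\rho)|$; averaging over $\nu$ gives the unconditional drift
\[
\sum_{\nu=1}^{2^N-1}\bar{q}^i_\gamma(\nu)\sum_{j\in\Psi_\nu}D(f^j_1,f^j_0)+|\log(1-\rho)|=\mathcal{D}^i_\gamma+|\log(1-\rho)|,
\]
in agreement with \eqref{W_n_mean}. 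This positive drift is exactly the claimed large deviation decay rate.

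It then remains to confirm that the nonlinear renewal theorem applies to \eqref{random_walk}. For this I would show that $l_n$ in \eqref{l_n} is slowly changing: under $\mathbb{P}_1$ the partial products $\prod_{s=1}^k\prod_{j\in O_s^i}f^j_0(X^j_s)/f^j_1(X^j_s)$ have log-increments of negative mean $-\mathcal{D}^i_\gamma$, so by the strong law they decay exponentially and, together with the geometric weights $(1-\rho)^k$, force the series inside \eqref{l_n} to converge almost surely; hence $l_n$ stabilizes and contributes only a bounded correction. The remaining hypotheses, namely that the increment law is non-arithmetic with a finite second moment and that the overshoot over the threshold is uniformly integrable, follow from the standing assumption $0<D(f^j_1,f^j_0)<\infty$ together with the fact that the increment is a finite mixture (over the at most $2^N$ values of $\nu$) of sums of finitely many log-likelihood ratios. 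Granting these, Theorem~5 of \cite{Tart-Bayesian} delivers \eqref{pp_pfa} and \eqref{pp_add}.

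I expect the principal obstacle to be precisely this verification of the renewal-theoretic regularity conditions for the mixture-distributed increments and the slowly-changing property of $l_n$. Unlike the centralized walk behind \eqref{rc_centr}, whose per-step divergence is the fixed constant $\mathcal{D}$, here the increment is a genuine mixture whose moment and lattice behavior are dictated by the communication-induced distribution $\bar{q}^i_\gamma(\cdot)$, and one must check that passing to this mixture preserves the integrability and non-arithmetic requirements that make the renewal overshoot distribution, and hence the constants $\zeta^i_\gamma$ and $\xi^i_\gamma$, well-defined. Once this is secured, the reduction to Theorem~\ref{Thm_cen} is routine.
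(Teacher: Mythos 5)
Your proposal follows essentially the same route as the paper's own proof: both rewrite $\tau_D^i$ in the random-walk-plus-slowly-changing-term form \eqref{random_walk}, identify the i.i.d.\ mixture increments with drift $\mathcal{D}^i_\gamma+|\log(1-\rho)|$, argue that $l_n$ converges almost surely and is uniformly integrable, establish the two asymptotics \eqref{dis_thm5_1}--\eqref{dis_thm5_2}, and conclude by repeating the algebra of Theorem~\ref{Thm_cen}. The only cosmetic difference is that the paper derives those asymptotics directly from Theorems 4.1 and 4.5 of \cite{Woodroofe}, verifying the three required conditions explicitly (the early-stopping condition via Lemma 1 of \cite{Tart-Bayesian} and Theorem 1 of \cite{Chow}), rather than invoking Theorem 5 of \cite{Tart-Bayesian} with the drift substituted --- which is the same renewal-theoretic machinery one level down.
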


Theorem~\ref{thm_distributed} shows that $\mathcal{D}^i_\gamma$, whose function is similar to $\mathcal{D}$ in the centralized scheme and $D (f_1^i,f_0^i)$ in the isolated scheme, is a crucial factor determining the performance of the distributed change detection algorithm. The physical meaning of $\mathcal{D}^i_\gamma$ is explained as follows. Due to the observation propagation process, observations and the corresponding log-likelihood ratios from other sensors are available at each sensor; to some extent, $\mathcal{D}^i_\gamma$ can be considered as an accumulated form of these information. In particular, $\mathcal{D}^i_\gamma$ is an averaged partial sum of the Kullback-Leibler information numbers $D (f_1^i,f_0^i),~i=1,\cdots,N$, compared to $\mathcal{D}$ as the total sum. Also, from the mathematical form of $\mathcal{D}^i_\gamma$, we see that $ D (f_1^i,f_0^i) \leq \mathcal{D}^i_\gamma \leq \mathcal{D}$, and the case of $\bar{q}^i_\gamma(1)=1$ corresponds to the lower bound $\mathcal{D}^i_\gamma=D (f_1^i,f_0^i)$, while the case of $\bar{q}^i_\gamma(2^N-1)=1$ corresponds to the upper bound $\mathcal{D}^i_\gamma = \mathcal{D}$. Since $ D (f_1^i,f_0^i) \leq \mathcal{D}^i_\gamma \leq \mathcal{D}$ and $\mathcal{D}^i_\gamma$ determines the performance of the change detection algorithm, the above analysis proves that the distributed algorithm outperforms the isolated algorithm, but falls behind the centralized algorithm.

We present the proof for the above theorem as follows.
\begin{proof}
The proof adopts the relevant results from the nonlinear renewal theory in \cite{Woodroofe}.  
To complete the proof, we first present two preliminary results, regarding the proposed distributed algorithm, as follows:
\begin{align}
\mbox{PFA}(\tau_D^i)&=\frac{\zeta(\rho,\mathcal{D}^i_\gamma)}{A}(1+o(1)),~\mbox{as}~A \rightarrow \infty, \label{dis_thm5_1}
\end{align}
\begin{align}
\mathbb{E}_1(\tau_D^i)&=\frac{1}{\mathcal{D}^i_\gamma+|\log(1-\rho)|}\left[\log \frac{A}{\rho}-\xi(\rho,\mathcal{D}^i_\gamma)\right]+o(1),\nonumber\\
&~~~~~~~~~~~~~\mbox{as}~A \rightarrow \infty,\label{dis_thm5_2}
\end{align}
where $\mathcal{D}^i_\gamma$ is defined below \eqref{eq_thm3}, denoting the averaged value of the Kullback-Leibler information number in the distributed algorithm, and $\zeta(\rho,\mathcal{D}^i_\gamma)$ and $\xi(\rho,\mathcal{D}^i_\gamma)$
are functions of parameters $\rho$ and $\mathcal{D}^i_\gamma$.

Note that the above results for the distributed algorithm is similar to Theorem 5 in \cite{Tart-Bayesian}, which is related to the performance of the isolated algorithm. The difference is that the averaged partial sum of the Kullback-Leibler numbers is involved in the distributed algorithm, due to the observation accumulation at each node. In the sequel, we provide the proof flow for these two results. 

First, we verify \eqref{dis_thm5_1}. By recalling $p_n$ defined in \eqref{posterior_prob} and $\Lambda_n=p_n/(1-p_n)$, we have
\begin{align}\label{Thm_eq_E00}
\mbox{PFA}(\tau_D^i)&=\mathbb{E}^\pi(1-p_{\tau_D^i})=\mathbb{E}^\pi(1+\Lambda_{\tau_D^i})^{-1}\nonumber\\
&=\mathbb{E}^\pi\left(1+A\frac{\Lambda_{\tau_D^i}}{A}\right)^{-1}\nonumber\\
&=\frac{1}{A}\mathbb{E}^\pi \left(e^{-\omega_a}\right)(1+o(1)),~A\rightarrow \infty,
\end{align}
where $\omega_a=\log \Lambda_{\tau_D^i} -a$ and $a=\log(A/\rho)$. For $\mathbb{E}^\pi \left(e^{-\omega_a}\right)$, we have
\begin{align}\label{Thm_eq_E0}
\mathbb{E}^\pi \left(e^{-\omega_a}\right)&=\mathbb{E}^\pi \left(e^{-\omega_a}|{\tau_D^i\geq \lambda} \right) (1-\mbox{PFA}(\tau_D^i))\nonumber\\
&~~~+\mathbb{E}^\pi \left(e^{-\omega_a}{|\tau_D^i < \lambda}\right) \mbox{PFA}(\tau_D^i)\nonumber\\
&=\mathbb{E}^\pi \left( e^{-\omega_a}{|\tau_D^i\geq \lambda}\right)+O(A^{-1}),~A\rightarrow \infty,
\end{align}
which is due to $\mbox{PFA}(\tau_D^i) \leq 1/(1+A) < 1/A$.

Thus, we turn to study $\mathbb{E}^\pi \left( e^{-\omega_a}{|\tau_D^i\geq \lambda}\right)$ as
\begin{align}\label{thm_eq_E1}
&\mathbb{E}^\pi \left( e^{-\omega_a}{|\tau_D^i\geq \lambda}\right)\nonumber\\
&=\sum_{k=1}^{\infty} \mathbb{E}_k \left( e^{-\omega_a}{|\tau_D^i\geq k}\right) \mathbb{P}(\lambda=k|\tau_D^i\geq k).
\end{align}

For any $1\leq k <\infty$, we have
\begin{equation}
\tau^i_D=\inf\left\{n\geq 1: W_{n,k}(\rho)+{l}_{n,k} \geq a\right\},
\end{equation}
where $W_{n,k}(\rho)=Z_{n,k}+(n-k+1)|\log(1-\rho)|$,~$n\geq k$,~is a random walk with
$\mathbb{E}_k \left[W_{n,k}(\rho)\right]=(n-k+1)(\mathcal{D}^i_\gamma+|\log(1-\rho)|)$ and $l_{n,k}$ is a nonlinear term. In $W_{n,k}(\rho)$, we have
\begin{align}
Z_{n,k}=\sum_{t=k}^n \sum_{j\in O_t^i} \log \frac{f_0^j(X^j_t)}{f_1^j(X^j_t)}.
\end{align}

Then, by applying Theorem 4.1 in \cite{Woodroofe}, we obtain
\begin{equation}\label{thm_eq_E}
\lim_{A \rightarrow \infty} \mathbb{E}_k \left( e^{-\omega_a}{|\tau_D^i\geq k}\right) = \zeta(\rho,\mathcal{D}^i_\gamma),
\end{equation}
where $\zeta(\rho,\mathcal{D}^i_\gamma)$ is a function of parameters $\rho$ and $\mathcal{D}^i_\gamma$.

We also have
\begin{equation}\label{Thm_eq_p}
\lim_{A\rightarrow \infty} \mathbb{P}(\lambda=k|\tau_D^i\geq k)=\lim_{A\rightarrow \infty} \frac{\pi_k \mathbb{P}_k(\tau_D^i\geq k|\lambda=k)}{\mathbb{P}^\pi(\tau_D^i\geq k)}=\pi_k.
\end{equation}

Therefore, by plugging \eqref{thm_eq_E} and \eqref{Thm_eq_p} into \eqref{thm_eq_E1}, we have
\begin{equation}\label{Thm_eq_E2}
\lim_{A\rightarrow \infty} \mathbb{E}^\pi \left( e^{-\omega_a}{|\tau_D^i\geq \lambda}\right)=\zeta(\rho,\mathcal{D}^i_\gamma).
\end{equation}
Finally, by combining \eqref{Thm_eq_E00}, \eqref{Thm_eq_E0}, and \eqref{Thm_eq_E2}, we prove \eqref{dis_thm5_1}.

The proof of \eqref{dis_thm5_2} depends on Theorem 4.5 in \cite{Woodroofe}. In order to use this theorem, the validity of the following three conditions needs to be checked:
\begin{align}
&\sum_{n=1}^{\infty} \mathbb{P}_1 \{l_n \leq -\theta n\} < \infty,~\mbox{for~some}~0<\theta < \mathcal{D}_D^i;\nonumber\\
&\max_{0\leq k \leq n}|l_{n+k}|,~n\geq 1,~\mbox{are}~\mathbb{P}_1~\mbox{uniformly~integrable};\nonumber\\
&\lim_{A\rightarrow \infty}a\mathbb{P}_1\{\tau^i_D(A)\leq \varepsilon a (\mathcal{D}_D^i+|\log(1-\rho)|)^{-1}\}=0,\nonumber\\
&~~~~~~~~~~~~~\mbox{for~some}~0<\varepsilon<1,~\mbox{where}~a=\log(A/\rho),\nonumber
\end{align}
where $l_n$ is defined in \eqref{l_n}.

It is easy to check that the first condition is valid, as $l_n\geq0$. For the second condition, we have $\max_{0\leq k \leq n}|l_{n+k}|=l_{2n}$, since $l_n$, $n=1,2,\cdots$, are non-decreasing. Thus, to check that the second condition is valid, we only need to show that $l_n$, $n=1,2,\cdots$, are $\mathbb{P}_1$ uniformly integrable. To this end, we have that $l_n$ converges almost surely, as $n\rightarrow \infty$, to the following random variable
\begin{equation}
l=\log \left\{1+\sum_{k=1}^{\infty} (1-\rho)^k \prod_{s=1}^k \prod_{j\in O_s^i} \frac{f_0^j(X^j_s)}{f_1^j(X^j_s)}\right\}.
\end{equation}
By taking the expectation, we have
\begin{equation}
\mathbb{E}_1(l) \leq \log \left\{1+\sum_{k=1}^{\infty} (1-\rho)^k\right\}=\log \frac{1}{\rho}.
\end{equation}

Since $l_n$, $n=1,2,\cdots$, are non-decreasing, we have $l_n\leq l$. Then, we have $\mathbb{E}_1(l_n) < \infty$, implying the uniform integrability. Therefore, the second condition is satisfied.

Now we intend to show the validity of the third condition. 
According to Lemma 1 in \cite{Tart-Bayesian}, we have
\begin{equation}
\mathbb{P}_1\left\{\tau^i_D(A)\leq 1+(1-\epsilon) L_a\right\} \leq e^{-\phi_\epsilon a}+\beta(\epsilon,A),
\end{equation}
where $L_a=a (\mathcal{D}_D^i+|\log(1-\rho)|)^{-1}$, $\phi_\epsilon > 0$ for all $0 < \epsilon <1$, and $\beta(\epsilon,A)=\mathbb{P}_1\left\{ \max_{1\leq n < K_{\epsilon,A}} Z_n \geq (1+\epsilon) \mathcal{D}_D^i K_{\epsilon,A}\right\}$, in which $K_{\epsilon,A}=(1-\epsilon)L_a$ and $Z_n$ is defined in \eqref{Z_n}. The term $ e^{-\phi_\epsilon a}$ on the right-hand side is $o(1/a)$. 
Thus, in order to show
\begin{equation}
\lim_{A\rightarrow \infty}a \mathbb{P}_1\left\{\tau^i_D(A)\leq 1+(1-\epsilon) L_a\right\}=0,
\end{equation}
we only need to prove that the other term $\beta(\epsilon,A)$ is also $o(1/a)$, since $a=\log(A/\rho)$. To this end, by applying Theorem 1 of \cite{Chow}, for $\nu>0$ and $r\geq 0$, we have
\begin{align}
&\sum_{n=1}^{\infty}\mathbb{P}_1\left\{ \max_{1\leq k \leq n} (Z_k-\mathcal{D}_D^i k) \geq \nu n\right\} \nonumber\\
&\leq C_r\left\{ \mathbb{E}_1 [(Z_1-\mathcal{D}_D^i)^+]^{r+1}+[\mathbb{E}_1(Z_1-\mathcal{D}_D^i)^2]^r\right\},
\end{align}
where $C_r$ is a constant. When $r=1$, the finiteness of the right-hand side of the above inequality implies that the left-hand side is also finite. Thus, we obtain $\mathbb{P}_1\left\{ \max_{1\leq k \leq n} (Z_k-\mathcal{D}_D^i k) \geq \nu n\right\}=o(1/n)$.

Then, with the fact that
\begin{equation}
\beta(\epsilon,A) \leq \mathbb{P}_1\left\{ \max_{1\leq n < K_{\epsilon,A}} (Z_n-\mathcal{D}_D^i n) \geq \epsilon \mathcal{D}_D^i K_{\epsilon,A}\right\},
\end{equation}
we have $\beta(\epsilon,A)=o(1/a)$. Therefore,
\begin{equation}
\lim_{A\rightarrow \infty}a \mathbb{P}_1\left\{\tau^i_D(A)\leq 1+(1-\epsilon) L_a\right\}=0.
\end{equation}
By taking $\varepsilon =1-\epsilon$, finally we have
\begin{align}
&\lim_{A\rightarrow \infty}a\mathbb{P}_1\{\tau^i_D(A)\leq \varepsilon L_a\} \nonumber\\
&\leq \lim_{A\rightarrow \infty}a \mathbb{P}_1\left\{\tau^i_D(A)\leq 1+(1-\epsilon) L_a\right\}\nonumber\\
&=0
\end{align}
Hence the third condition is satisfied. Therefore, the conditions of Theorem 4.5 in \cite{Woodroofe} are satisfied. This theorem shows that \eqref{dis_thm5_2} is valid.

Then, with \eqref{dis_thm5_1} and \eqref{dis_thm5_2}, by taking the same proof method of Theorem~\ref{Thm_cen}, we have
\begin{align}
\lim_{\mbox{CADD}_1(\tau_D^i)\rightarrow \infty} & \frac{1}{\mbox{CADD}_1(\tau_D^i)} \log [\mbox{PFA}(\tau_D^i)]\nonumber\\
&=-( \mathcal{D}^i_\gamma+|\log(1-\rho)|).
\end{align}
\end{proof}

\subsection{Second-layer Large Deviation Analysis}
 Since $\mathcal{D}^i_\gamma$ has been shown as a crucial factor in the large deviation analysis of last subsection, in this subsection, we focus on studying the behavior of $\mathcal{D}^i_\gamma$. As we still stay in the scope of large deviation analysis as we did in the last subsection, we call it as the second-layer large deviation analysis, where the analysis in the last subsection is called the first-layer large deviation analysis.

As we cannot obtain the closed-form for $\mathcal{D}^i_\gamma$ due to the complicated probabilities incorporated, we discuss its asymptotic behavior when $\gamma \rightarrow \infty$. To this end, we first study the behavior of $\bar{q}^i_\gamma(\nu)$, defined below \eqref{W_n_mean}, when $\gamma \rightarrow \infty$, by employing the concept of hitting times in Markov chains.

For each $\nu \neq 2^N-1$, without loss of generality, we assume that $\nu$ corresponds to the index of the sensor subset $\{i_1,i_2,\cdots,i_m\}$, with $\{i'_1,i'_2,\cdots,i'_{N-m}\}$ as the complementary subset, where $m\geq 1$ due to the fact that at least its own observation is available at each sensor. Let $T_j$ denote the hitting time, starting from state (index of sensor) $j$ to hit another specific state $i$ in the Markov chain, whose transition probability matrix is $\bar{\mathbf{A}}$ defined in \eqref{bar_A}. From Theorem 7.26 in \cite{Driver-Markov}, since the transition probability matrix $\bar{\mathbf{A}}$ is irreducible, there exists constants $ 0< \alpha <1$ and $ 0 <L < \infty$ such that $P(T_{j}>L) \leq \alpha,\forall j$, and more generally,
\begin{equation} \label{Driver1}
  P(T_{j}>kL) \leq \alpha^k, ~k=0,1,2,\cdots.
\end{equation}
Also, there exists a constant $0<\beta <1$ such that $P(T_{j}>L) \geq \beta,\forall j$, and more generally,
\begin{equation}\label{Driver2}
  P(T_{j}>kL) \geq \beta^k, ~k=0,1,2,\cdots.
\end{equation}

Based on the above results of hitting times in Markov chains, we first present the following large deviation related theorem on the asymptotic behavior of $\sum_{v=0}^{2^N-2}\bar{q}^i_\gamma(v)$, as $\gamma \rightarrow \infty$. Since $\nu \in \{0,1,\cdots, 2^N-1\}$ according to \eqref{q_def1}, we have $\sum_{v=0}^{2^N-2}\bar{q}^i_\gamma(v)=1-\bar{q}^i_\gamma(2^N-1)$, where $\bar{q}^i_\gamma(2^N-1)$ denotes the probability that the observations from all sensors are available at sensor $i$, i.e., $\sum_{v=0}^{2^N-2}\bar{q}^i_\gamma(v)$ is the probability of the event that not all observations are available at sensor $i$.

\begin{theorem}\label{thm_second}
As $\gamma \rightarrow \infty$, the probability $\sum_{v=0}^{2^N-2}\bar{q}^i_\gamma(v)$ has the large deviation upper and lower bounds as follows,
\begin{equation}
 \frac{\ln \beta}{L}\leq \lim_{{\gamma} \rightarrow \infty} \frac{1}{{\gamma}} \ln \sum_{v=0}^{2^N-2}\bar{q}^i_\gamma(v) \leq \frac{\ln \, \alpha}{L},
\end{equation}
where $\alpha$, $\beta$ and $L$ are parameters in \eqref{Driver1} and \eqref{Driver2}.
\end{theorem}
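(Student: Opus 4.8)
The plan is to translate the event ``not all observations are available at sensor $i$'' into a statement about hitting times of the Markov chains carried by the traveling particles, and then to apply the geometric bounds \eqref{Driver1} and \eqref{Driver2} together with a union bound. First I would rewrite $\sum_{v=0}^{2^N-2}\bar{q}^i_\gamma(v)=1-\bar{q}^i_\gamma(2^N-1)$ as the probability that at least one observation is missing at node $i$ after $\gamma$ communication rounds (using the definition of $O^i_\gamma$ in \eqref{q_def1}, where $\gamma$ is the number of rounds). Recalling the particle interpretation, the observation originally held at node $j$ travels on the graph as the Markov chain $\{z_j(m)\}$ with transition matrix $\bar{\mathbf{A}}$ from \eqref{bar_A} started at $j$, and it becomes available at node $i$ precisely when this chain first visits $i$. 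Hence observation $j$ is missing at $i$ after $\gamma$ rounds if and only if $T_j>\gamma$, where $T_j$ is the hitting time of state $i$ from state $j$. Since node $i$ always holds its own observation ($T_i=0$), only the indices $j\neq i$ contribute, so that
\[
\sum_{v=0}^{2^N-2}\bar{q}^i_\gamma(v)=\mathbb{P}\!\left(\bigcup_{j\neq i}\{T_j>\gamma\}\right).
\]

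For the upper bound I would apply the union bound and then \eqref{Driver1}: choosing $k=\lfloor \gamma/L\rfloor$ so that $kL\le \gamma$ gives $\{T_j>\gamma\}\subseteq\{T_j>kL\}$ and therefore $\mathbb{P}(T_j>\gamma)\le \alpha^{\lfloor \gamma/L\rfloor}$ for every $j$; summing over the at most $N-1$ relevant indices yields $\sum_{v=0}^{2^N-2}\bar{q}^i_\gamma(v)\le (N-1)\,\alpha^{\lfloor \gamma/L\rfloor}$. Taking $\frac{1}{\gamma}\ln(\cdot)$ and letting $\gamma\to\infty$, the prefactor $\ln(N-1)/\gamma$ vanishes and $\lfloor \gamma/L\rfloor/\gamma\to 1/L$, which (since $\ln\alpha<0$) produces the upper bound $\ln\alpha/L$. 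For the lower bound I would discard all but one term: $\mathbb{P}(\bigcup_{j\neq i}\{T_j>\gamma\})\ge \mathbb{P}(T_{j_0}>\gamma)$ for any fixed $j_0\neq i$, and then apply \eqref{Driver2} with $k=\lceil \gamma/L\rceil$ (so that $kL\ge \gamma$ and $\{T_{j_0}>kL\}\subseteq\{T_{j_0}>\gamma\}$) to get $\mathbb{P}(T_{j_0}>\gamma)\ge \beta^{\lceil \gamma/L\rceil}$. The same limiting computation then delivers the lower bound $\ln\beta/L$, and combining the two establishes the claimed inequalities for the exponential rate.

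I expect the only genuinely delicate point to be the first step: carefully justifying the equivalence between the observation-availability event and the hitting-time event $\{T_j>\gamma\}$, and in particular that the trajectory of each observation is exactly the Markov chain with transition matrix $\bar{\mathbf{A}}$, which follows by tracking how the permutation (matching) matrices $\mathbf{A}(m)$ relocate a single observation step by step. Once that identification is in place the remaining steps are robust: crucially, neither the union bound (upper estimate) nor the single-term bound (lower estimate) requires the hitting times $\{T_j\}_{j\neq i}$ to be independent, so the statistical coupling induced by the shared matching sequence $\{\mathbf{A}(m)\}$ causes no difficulty. A minor technical remark is that, strictly speaking, the argument controls $\limsup_{\gamma\to\infty}$ from above by $\ln\alpha/L$ and $\liminf_{\gamma\to\infty}$ from below by $\ln\beta/L$; these are precisely the two-sided bounds asserted, with the limit itself existing in the regime where $\alpha$ and $\beta$ can be taken equal.
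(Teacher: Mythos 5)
Your proof is correct, and it rests on the same two ingredients as the paper's own argument: the identification of observation availability with hitting times of the particle Markov chains (so that $j\in O^i_\gamma$ iff $T_j\le\gamma$), and the geometric tail bounds \eqref{Driver1}--\eqref{Driver2}. However, the way you assemble the bounds is genuinely different, and in one respect cleaner. The paper first bounds each individual probability $\bar{q}^i_\gamma(\nu)$: from above by $\min_{n}\Pr(T_{i'_n}>\gamma)\le\alpha^{\lfloor\gamma/L\rfloor}$, and from below by the product of marginals as in \eqref{q_lower_bound}; it then obtains the upper bound on the sum via $(2^N-1)\max_\nu\bar{q}^i_\gamma(\nu)$ and the lower bound by retaining a single term $\bar{q}^i_\gamma(v_p)$ with $m=N-1$. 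You instead collapse the whole sum into the single union event $\bigcup_{j\ne i}\{T_j>\gamma\}$ and bound that directly: a union bound from above, a single marginal $\Pr(T_{j_0}>\gamma)$ from below. The payoff is that your argument nowhere touches the joint law of the hitting times, whereas the paper's step \eqref{q_lower_bound} asserts that the probability of an intersection of hitting-time events is at least the product of the corresponding marginals --- a claim that is not automatic, since the $T_j$'s are all driven by the common matching sequence $\{\mathbf{A}(m)\}$ and hence are dependent; justifying it would require an independence or positive-association argument that the paper does not supply. Your monotonicity and union bounds are valid regardless of that dependence, so your route actually circumvents the one delicate step in the paper's proof while yielding the identical rates $\ln\alpha/L$ and $\ln\beta/L$. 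Two further points where you and the paper agree: both treat $\gamma$ as the deterministic number of communication rounds even though it is introduced as the Poisson mean, and both arguments strictly establish $\limsup_{\gamma\to\infty}\le\ln\alpha/L$ and $\liminf_{\gamma\to\infty}\ge\ln\beta/L$ rather than existence of the limit itself; your closing remark makes this explicit, which the paper does not.
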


Since $\sum_{v=0}^{2^N-2}\bar{q}^i_\gamma(v)$ presents the probability of the event that not all observations are available at sensor $i$, Theorem~\ref{thm_second} implies that this event is a rare event and its probability decays exponentially fast to zero as $\gamma \rightarrow \infty$.

The proof is presented as follows.
\begin{proof}
 Recall that $\nu$ corresponds to the index of the sensor subset $\{i_1,i_2,\cdots,i_m\}$, with $\{i'_1,i'_2,\cdots,i'_{N-m}\}$ as the complementary subset, and $T_j$ denotes the hitting time, starting from state (index of sensor) $j$ to hit another specific state $i$ in the Markov chain. Then, the probability $\bar{q}^i_\gamma(\nu)$ could be represented as
\begin{align}\label{q_up_bound}
\bar{q}^i_\gamma(\nu)&=\mbox{Pr}(T_{i'_1}>\gamma,\cdots,T_{i'_{N-m}}>\gamma,T_{i_1} \leq \gamma,\cdots,T_{i_m} \leq \gamma)\nonumber\\
&\leq \mbox{Pr}(T_{i'_1}>\gamma,\cdots,T_{i'_{N-m}}>\gamma) \nonumber\\
&\leq \min_{1\leq n \leq N-m} \mbox{Pr}(T_{i'_n}>\gamma).
\end{align}

Thus, we have
\begin{align}\label{upperbound_Oprotocol1}
\lim_{{\gamma} \rightarrow \infty} &\frac{1}{{\gamma}} \ln\left(\bar{q}^i_\gamma(\nu)\right) \leq \lim_{{\gamma} \rightarrow \infty} \frac{1}{{\gamma}} \ln \left(  \min_{1 \leq n \leq {N-m}}
P(T_{i'_n}>\gamma) \right) \nonumber \\
&\leq  \lim_{{\gamma} \rightarrow \infty} \frac{1}{{\gamma}} \ln \left( \alpha^{ \lfloor \gamma/L \rfloor} \right)= \frac{\ln \, \alpha}{L}
\end{align}
where the second inequality is due to \eqref{Driver1}.

For $\bar{q}^i_\gamma(\nu)$, we also have
\begin{align}\label{q_lower_bound}
\bar{q}^i_\gamma(\nu)&=\mbox{Pr}(T_{i'_1}>\gamma,\cdots,T_{i'_{N-m}}>\gamma,T_{i_1} \leq \gamma,\cdots,T_{i_m} \leq \gamma) \nonumber\\
&\geq \mbox{Pr}(T_{i'_1}>\gamma)\cdots \mbox{Pr}(T_{i'_{N-m}}>\gamma) \nonumber\\
&~~~~\mbox{Pr}(T_{i_1} \leq \gamma) \cdots \mbox{Pr}(T_{i_m} \leq \gamma).
\end{align}

This leads to
\begin{align}\label{lowerbound_Oprotocol1}
\lim_{{\gamma} \rightarrow \infty}& \frac{1}{{\gamma}} \ln\left(\bar{q}^i_\gamma(\nu)\right) \nonumber\\
& \geq \lim_{{\gamma} \rightarrow \infty} \frac{1}{{\gamma}} \ln\left[  \left(\beta^{ \lceil \gamma/{L} \rceil }\right)^{N-m}   \left(1- \alpha^{ \lfloor \gamma/{L} \rfloor }  \right)^m \right] \nonumber\\
& = (N-m) \frac{\ln \beta}{L}
\end{align}
where the first inequality is due to \eqref{Driver1} and \eqref{Driver2}, and the last equality is derived with $0 < \alpha <1$.

By combining \eqref{upperbound_Oprotocol1} and \eqref{lowerbound_Oprotocol1}, we have
\begin{equation}\label{up_lower_bound}
 (N-m) \frac{\ln \beta}{L} \leq \lim_{{\gamma} \rightarrow \infty} \frac{1}{{\gamma}} \ln\left(\bar{q}^i_\gamma(\nu)\right) \leq \frac{\ln \, \alpha}{L}.
\end{equation}


Then, we obtain
\begin{align}\label{up}
 &\lim_{{\gamma} \rightarrow \infty} \frac{1}{{\gamma}} \ln \sum_{v=0}^{2^N-2}\bar{q}^i_\gamma(v)\nonumber\\
 &\leq \lim_{{\gamma} \rightarrow \infty} \frac{1}{{\gamma}} \ln \left[(2^N-1) \max_v (\bar{q}^i_\gamma(v))\right]\nonumber\\
 &=\lim_{{\gamma} \rightarrow \infty} \frac{1}{{\gamma}} \ln \left[\max_v(\bar{q}^i_\gamma(v))\right]\nonumber\\
 &\leq \frac{\ln \, \alpha}{L},
\end{align}
where the last inequality is due to \eqref{up_lower_bound}.

We also have
\begin{align}\label{low}
 \lim_{{\gamma} \rightarrow \infty} \frac{1}{{\gamma}} \ln \sum_{v=0}^{2^N-2}\bar{q}^i_\gamma(v)&\overset{a}{\geq} \lim_{{\gamma} \rightarrow \infty} \frac{1}{{\gamma}} \ln  (\bar{q}^i_\gamma(v_p))\nonumber\\
 &\overset{b}{=}\frac{\ln \beta}{L},
\end{align}
where $v_p$ on the right-hand side of inequality $a$ denotes a particular index of the subset of sensors such that $m=N-1$, i.e., $v_p$ is the index of the sensor subset $\{i_1,i_2,\cdots,i_{N-1}\}$, recalling the notations defined at the beginning of this section. Since for $v_p \in \{0,2^N-2\}$, we have $\sum_{v=0}^{2^N-2}\bar{q}^i_\gamma(v) \geq \bar{q}^i_\gamma(v_p)$, implying the establishment of inequality $a$. According to \eqref{up_lower_bound} and taking $m=N-1$, we derive the equation $b$ in \eqref{low}.

By combining \eqref{up} and \eqref{low}, we conclude that
\begin{equation}
 \frac{\ln \beta}{L}\leq \lim_{{\gamma} \rightarrow \infty} \frac{1}{{\gamma}} \ln \sum_{v=0}^{2^N-2}\bar{q}^i_\gamma(v) \leq \frac{\ln \, \alpha}{L}
\end{equation}
\end{proof}

Based on Theorem~\ref{thm_second}, we further have the following theorem regarding the behavior of the distributed Kullback-Leibler information number $\mathcal{D}^i_\gamma$ defined in Theorem~\ref{thm_distributed}.
\begin{theorem} \label{thm_third}
As $\gamma \rightarrow \infty$, we have the following upper and lower bounds for the value of $\mathcal{D}^i_\gamma$,
\begin{align}
\mathcal{D}- & \left[\max_{j\in \{1,\cdots,N\}\setminus i} D(f^j_1,f^j_0)\right] e^{ \frac{\ln \, \alpha}{L}\gamma}  \leq \mathcal{D}^i_\gamma \nonumber\\
&\leq \mathcal{D} - \left[\min_{j\in \{1,\cdots,N\}\setminus i} D(f^j_1,f^j_0)\right] e^{\frac{\ln \beta}{L}\gamma },
\end{align}
where $D(f^j_1,f^j_0)$ is the Kullback-Leibler information number defined in \eqref{KL_is} and $\mathcal{D}$ is the centralized Kullback-Leibler information number defined in Theorem~\ref{Thm_cen}, and $\ln \alpha/L$ and $\ln \beta /L$ are the upper and lower bounds derived in Theorem~\ref{thm_second}.
\end{theorem}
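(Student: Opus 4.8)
The plan is to control the gap $\mathcal{D}-\mathcal{D}^i_\gamma$ and to show that it vanishes exponentially in $\gamma$ at a rate trapped between $\ln\beta/L$ and $\ln\alpha/L$. Since $\sum_{\nu=0}^{2^N-1}\bar{q}^i_\gamma(\nu)=1$ and $\mathcal{D}=\sum_{j=1}^N D(f^j_1,f^j_0)$, I would first write
\begin{equation}
\mathcal{D}-\mathcal{D}^i_\gamma=\sum_{\nu=0}^{2^N-1}\bar{q}^i_\gamma(\nu)\sum_{j\notin\Psi_\nu}D(f^j_1,f^j_0),
\end{equation}
and then interchange the two summations. Using that sensor $i$ always lies in $O_\gamma^i$ and that $\mbox{Pr}(j\notin O_\gamma^i)=\mbox{Pr}(T_j>\gamma)$ for $j\neq i$ (the observation of $j$ fails to reach $i$ within $\gamma$ rounds precisely when the hitting time from $j$ to state $i$ exceeds $\gamma$), this produces the identity
\begin{equation}\label{gap_identity}
\mathcal{D}-\mathcal{D}^i_\gamma=\sum_{j\in\{1,\cdots,N\}\setminus i}D(f^j_1,f^j_0)\,\mbox{Pr}(T_j>\gamma).
\end{equation}
The whole problem then reduces to bounding the marginal hitting-time tails $\mbox{Pr}(T_j>\gamma)$, for which the estimates \eqref{Driver1} and \eqref{Driver2} that already drive Theorem~\ref{thm_second} are directly available.

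For the lower bound on $\mathcal{D}^i_\gamma$ (equivalently, an upper bound on the gap), I would factor $\max_{j\neq i}D(f^j_1,f^j_0)$ out of \eqref{gap_identity} and apply \eqref{Driver1} termwise, obtaining $\mathcal{D}-\mathcal{D}^i_\gamma\le\big[\max_{j\neq i}D(f^j_1,f^j_0)\big](N-1)\alpha^{\lfloor\gamma/L\rfloor}$; the $(N-1)$ summands contribute only a subexponential prefactor, so the decay is governed by $\alpha^{\gamma/L}=e^{(\ln\alpha/L)\gamma}$. For the upper bound on $\mathcal{D}^i_\gamma$ (a lower bound on the gap), I would retain a single nonnegative summand of \eqref{gap_identity}, namely that of $k^\star=\arg\min_{j\neq i}D(f^j_1,f^j_0)$, and apply \eqref{Driver2} to get $\mathcal{D}-\mathcal{D}^i_\gamma\ge\big[\min_{j\neq i}D(f^j_1,f^j_0)\big]\beta^{\lceil\gamma/L\rceil}$, whose rate is $e^{(\ln\beta/L)\gamma}$. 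This is the same two-sided hitting-time argument carried out inside the proof of Theorem~\ref{thm_second}, except that the contributions are now weighted by the individual Kullback--Leibler numbers rather than merely counted, which is what converts the scalar rate bounds into the stated bounds on $\mathcal{D}^i_\gamma$.

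The step demanding the most care is reconciling the exact-looking inequalities in the statement with their genuinely asymptotic content. The hitting-time estimates \eqref{Driver1}--\eqref{Driver2} are sharp only at integer multiples of $L$, so at a general horizon $\gamma$ the rounding in $\lfloor\gamma/L\rfloor$ and $\lceil\gamma/L\rceil$, together with the $(N-1)$ counting factor, prevents the prefactors from matching $e^{(\ln\alpha/L)\gamma}$ and $e^{(\ln\beta/L)\gamma}$ exactly. These discrepancies are subexponential, however, and disappear under $\tfrac{1}{\gamma}\ln(\cdot)$; I would therefore read the two bounds as capturing the leading exponential order of $\mathcal{D}-\mathcal{D}^i_\gamma$, equivalently as $\frac{\ln\beta}{L}\le\liminf_{\gamma}\frac{1}{\gamma}\ln(\mathcal{D}-\mathcal{D}^i_\gamma)$ and $\limsup_{\gamma}\frac{1}{\gamma}\ln(\mathcal{D}-\mathcal{D}^i_\gamma)\le\frac{\ln\alpha}{L}$. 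The only substantive fact needed beyond Theorem~\ref{thm_second} is that each marginal tail $\mbox{Pr}(T_j>\gamma)$ inherits both rates uniformly in $j$, which is immediate from the irreducibility of $\bar{\mathbf{A}}$.
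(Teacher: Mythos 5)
Your proof is correct, and it takes a genuinely different route from the paper's. The paper keeps the subset-indexed decomposition: it writes $\mathcal{D}^i_\gamma=\bigl(1-\sum_{\nu=1}^{2^N-2}\bar{q}^i_\gamma(\nu)\bigr)\mathcal{D}+\sum_{\nu=1}^{2^N-2}\bar{q}^i_\gamma(\nu)\sum_{j\in\Psi_\nu}D(f^j_1,f^j_0)$, bounds the partial sums $\sum_{j\in\Psi_\nu}D(f^j_1,f^j_0)$ by their extrema over proper subsets to produce the constants $\min_{j\neq i}D(f^j_1,f^j_0)$ and $\max_{j\neq i}D(f^j_1,f^j_0)$, and then invokes Theorem~\ref{thm_second} in the sandwich form \eqref{thm3_eq3}, $e^{\gamma\ln\beta/L}\leq\sum_{\nu=0}^{2^N-2}\bar{q}^i_\gamma(\nu)\leq e^{\gamma\ln\alpha/L}$. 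You instead interchange the two summations to get the exact identity $\mathcal{D}-\mathcal{D}^i_\gamma=\sum_{j\neq i}D(f^j_1,f^j_0)\,\mbox{Pr}(T_j>\gamma)$ --- legitimate here, since the paper's own representation \eqref{q_up_bound} identifies $\{j\notin O^i_\gamma\}$ with $\{T_j>\gamma\}$ --- and then apply \eqref{Driver1}--\eqref{Driver2} to the marginal tails, bypassing Theorem~\ref{thm_second} entirely. Your route buys two things. First, only marginal hitting-time probabilities survive the interchange, so you never need the paper's product lower bound \eqref{q_lower_bound} for a joint hitting-time event, an independence-type step the paper does not justify. Second, the prefactors become transparent, and this exposes that the paper's lower-bound constant is optimistic: over proper subsets containing $i$, the minimum partial sum is $D(f^i_1,f^i_0)$ (attained at $\Psi_\nu=\{i\}$), not $\mathcal{D}-\max_{j\neq i}D(f^j_1,f^j_0)$, so an honest constant is $\sum_{j\neq i}D(f^j_1,f^j_0)$ via the paper's method, or your $(N-1)\max_{j\neq i}D(f^j_1,f^j_0)$ --- the same exponential rate either way. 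Your closing caveat is also the right reading of the statement: since Theorem~\ref{thm_second} is itself a limit statement and the floor/ceiling and counting factors are subexponential, both your bounds and the paper's hold at the level of exponential order, i.e.,
\begin{equation}
\frac{\ln\beta}{L}\leq\liminf_{\gamma\to\infty}\frac{1}{\gamma}\ln\bigl(\mathcal{D}-\mathcal{D}^i_\gamma\bigr)\leq\limsup_{\gamma\to\infty}\frac{1}{\gamma}\ln\bigl(\mathcal{D}-\mathcal{D}^i_\gamma\bigr)\leq\frac{\ln\alpha}{L},
\end{equation}
which is precisely the leading-order identification the paper makes silently when it converts Theorem~\ref{thm_second} into \eqref{thm3_eq3}.
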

Theorem~\ref{thm_third} implies that $\mathcal{D}^i_\gamma$ converges to $\mathcal{D}$ exponentially fast, as $\gamma \rightarrow \infty$. Since $\mathcal{D}^i_\gamma$ and $\mathcal{D}$ determine the performance of the distributed and centralized algorithms respectively, this theorem also implies that the performance of the proposed distributed algorithm converges to that of the centralized one at an exponentially fast rate.
\begin{proof}
Recall $\mathcal{D}^i_\gamma=\sum_{\nu=1}^{2^N-1}\bar{q}^i_\gamma(\nu)\sum_{j\in {\Psi}_\nu}D(f^j_1,f^j_0)$ and $\mathcal{D}=\sum_{i=1}^N D (f_1^i,f_0^i)$. We have
\begin{align}\label{thm_eq_D}
\mathcal{D}^i_\gamma& \overset{a}{=}\bar{q}^i_\gamma(2^N-1) \mathcal{D} + \sum_{\nu=1}^{2^N-2}\bar{q}^i_\gamma(\nu)\sum_{j\in {\Psi}_\nu}D(f^j_1,f^j_0) \nonumber\\
&\overset{b}{=}\left(1-\sum_{\nu=1}^{2^N-2}\bar{q}^i_\gamma(\nu)\right) \mathcal{D} +  \sum_{\nu=1}^{2^N-2}\bar{q}^i_\gamma(\nu)\sum_{j\in {\Psi}_\nu}D(f^j_1,f^j_0),
\end{align}
where equation $a$ is due to the fact that ${\Psi}_\nu=\{1,\cdots,N\}$ with $\nu = 2^N-1$, i.e., ${\Psi}_{2^N-1}$ denotes the set of indices of all sensors, and equation $b$ is based on $\sum_{\nu=1}^{2^N-1}\bar{q}^i_\gamma(\nu)=1$.

Then, from \eqref{thm_eq_D}, we have
\begin{align}\label{thm3_eq1}
\mathcal{D}^i_\gamma & \leq \left(1-\sum_{\nu=1}^{2^N-2}\bar{q}^i_\gamma(\nu)\right) \mathcal{D} \nonumber \\
&+ \sum_{\nu=1}^{2^N-2}\bar{q}^i_\gamma(\nu) \max_{1 \leq \nu \leq 2^N-2}\sum_{j\in {\Psi}_\nu}D(f^j_1,f^j_0) \nonumber\\
&= \mathcal{D} - \sum_{\nu=1}^{2^N-2}\bar{q}^i_\gamma(\nu) \min_{j\in \{1,\cdots,N\}\setminus i} D(f^j_1,f^j_0).
\end{align}
We could also obtain
\begin{align}\label{thm3_eq2}
\mathcal{D}^i_\gamma & \geq \left(1-\sum_{\nu=1}^{2^N-2}\bar{q}^i_\gamma(\nu)\right) \mathcal{D} \nonumber \\
&+ \sum_{\nu=1}^{2^N-2}\bar{q}^i_\gamma(\nu) \min_{1 \leq \nu \leq 2^N-2}\sum_{j\in {\Psi}_\nu}D(f^j_1,f^j_0) \nonumber\\
&= \mathcal{D} - \sum_{\nu=1}^{2^N-2}\bar{q}^i_\gamma(\nu) \max_{j\in \{1,\cdots,N\}\setminus i} D(f^j_1,f^j_0).
\end{align}

According to Theorem~\ref{thm_second}, as $\gamma \rightarrow \infty$, we have
\begin{equation}\label{thm3_eq3}
e^{ \frac{\ln \beta}{L}\gamma} \leq \sum_{v=0}^{2^N-2}\bar{q}^i_\gamma(v) \leq e^{ \frac{\ln \, \alpha}{L}\gamma}.
\end{equation}

Then, by combining \eqref{thm3_eq1}, \eqref{thm3_eq2} and \eqref{thm3_eq3}, as $\gamma \rightarrow \infty$, we derive
\begin{align}
\mathcal{D}- & \left[\max_{j\in \{1,\cdots,N\}\setminus i} D(f^j_1,f^j_0)\right] e^{ \frac{\ln \, \alpha}{L}\gamma}  \leq \mathcal{D}^i_\gamma \nonumber\\
&\leq \mathcal{D} - \left[\min_{j\in \{1,\cdots,N\}\setminus i} D(f^j_1,f^j_0)\right] e^{\frac{\ln \beta}{L}\gamma }
\end{align}
\end{proof}
\section{Simulation Results}\label{simulation}

In this section, we simulate the proposed distributed algorithm with a network of 5 nodes taking observations. We consider a Bayesian setup, and set the prior distribution of the change-point time as a geometric distribution with parameter $\rho=0.1$. Before the change happens, we consider that the observation at each node follows a Gaussian distribution $N(0,1)$; after the change happens, the observation at node $i,~i=1,\cdots,5$, turns to follow another Gaussian distribution $N(0.1 \times i,1)$. Note that here we consider a setup that observations at different nodes have different post-change distributions, which is to mimic the more general situation that different nodes could suffer different levels of impact from the same physical change. For example, certain physical event, such as the leakage of chemical gas or the abrupt increasing of temperature, would lead to different degrees of impacts in different nodes, due to their various locations. The nodes near the origin of the physical event could suffer from a more serious influence, which is reflected by a larger mean in the post-distribution; the nodes faraway the origin could suffer from a less serious influence, which is reflected by a smaller mean in the post-distribution.

In Fig.~\ref{dis_cen_is}, we show the simulated and analytical results corresponding to the first-layer large deviation analysis, and also compare the performance of the distributed scheme versus the centralized and isolated ones. In the simulation, we set $\gamma$ as 6, recalling that $\gamma$ is the mean value for number of communication rounds within each sampling time period. In Fig.~\ref{dis_cen_is}, the dashed curves denote the simulated decay rate, and the solid lines present the analytical decay rates in Theorem~\ref{Thm_cen} for the centralized scheme, Corollary~\ref{Coro_iso} for the isolated scheme, and Theorem~\ref{thm_distributed} for the distributed scheme, respectively. A higher decay rate implies a lower PFA under the same conditional ADD, which means that the performance is better. Therefore, from the simulation results of the decay rates, we see that the the distributed scheme outperforms the isolated one, but performs worse than the centralized one, which conforms to the analytical result from Theorem \ref{thm_distributed}.

In Fig.~\ref{upper_lower_bound}, we show the simulation results of the value $\frac{1}{{\gamma}} \ln \sum_{v=0}^{2^N-2}\bar{q}^i_\gamma(v)$, denoting the decay rate of the rare event that not all observations are available at sensor $i$, as the parameter $\gamma$ increases, which is the second-layer large deviation analysis shown in Theorem~\ref{thm_second}. We also present the large deviation lower and upper bounds in Fig.~\ref{upper_lower_bound}, from which we see that the simulated decay rate locates between the large deviation lower and upper bounds, and the bounds are relatively tight, which verifies the analytical result in Theorem~\ref{thm_second}. Here we also present the lower bound ${\ln \beta}/{L}$ and the upper bound ${\ln \alpha}/{L}$ of Theorem~\ref{thm_second}, which are shown in Fig.~\ref{upper_lower_bound}. Recall that $T_j$ denotes the hitting time, starting from state (index of sensor) $j$ to hit another specific state $i$ in the Markov chain with the transition probability matrix $\bar{\mathbf{A}}$. Then we have
\begin{equation}
P(T_j > L)=\sum_{i_1,\cdots,i_L \neq i} \bar{\mathbf{A}}_{ji_1} \bar{\mathbf{A}}_{i_1i_2}\bar{\mathbf{A}}_{i_2i_3}\cdots\bar{\mathbf{A}}_{i_{L-1}i_L}.
\end{equation}
Recall that we intend to find $\alpha$ such that $P(T_{j}>L) \leq \alpha,~\forall j$. Thus, we can set $\alpha=\max_j P(T_j > L)$. In order to find $\beta$ such that $P(T_{j}>L) \geq \beta,~\forall j$, we can set $\beta=\min_j P(T_j>L)$. Then, we are ready to calculate ${\ln \alpha}/{L}$ and ${\ln \beta}/{L}$. To this end, the selection of $L$ is a critical step, as both $\alpha$ and $\beta$ are calculated based on the selection of $L$. Here we show the calculation of ${\ln \alpha}/{L}$ and ${\ln \beta}/{L}$ with different $L$ values in Fig.~\ref{upper_lower_bound_calculation}. From Fig.~\ref{upper_lower_bound_calculation}, we see a very interesting phenomenon that these two bounds look converging as $L$ increases, although here we will not provide the mathematical proof of this result. This observation could imply some potential properties for hitting time in Markov chains. The further exploration with analytical analysis based on this observation will be left for our future work. Note that the upper and lower bounds in Fig.~\ref{upper_lower_bound} are set as the values calculated with $L=15$.

In Fig.~\ref{upper_lower_bound_D}, we show the simulation results for the distributed Kullback-Leibler information $\mathcal{D}_\gamma^i$, the value of the centralized Kullback-Leibler information $\mathcal{D}$, and the calculation results for the upper and lower bounds presented in Theorem~\ref{thm_third}. From Fig.~\ref{upper_lower_bound_D}, we see that the upper bound is a very tight bound, while the lower bound is relatively looser. However, the range of y-axis in this figure is very small from 0.3765 to 0.3810; so both the lower and upper bounds are tight bounds in this sense. We also see that the distributed Kullback-Leibler information $\mathcal{D}_\gamma^i$ converges to the centralized Kullback-Leibler information $\mathcal{D}$, as $\gamma$ increases, which implies that the performance of the distributed change detection scheme converges to that of the centralized one, since $\mathcal{D}^i_\gamma$ and $\mathcal{D}$ determine the performance of the distributed and centralized schemes, respectively.

\begin{figure}
  \centering
  \includegraphics[width=0.45\textwidth]{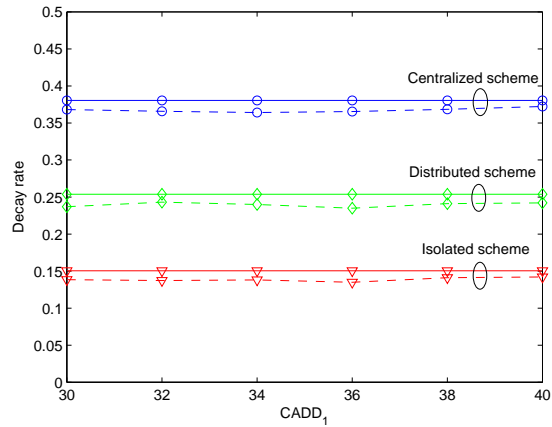}\\
  \caption{First-layer large deviation analysis: comparison of decay rates in distributed, centralized, and isolated schemes with simulation (dash curve) vs. analytical results (solid line).}\label{dis_cen_is}
\end{figure}

\begin{figure}
  \centering
  \includegraphics[width=0.45\textwidth]{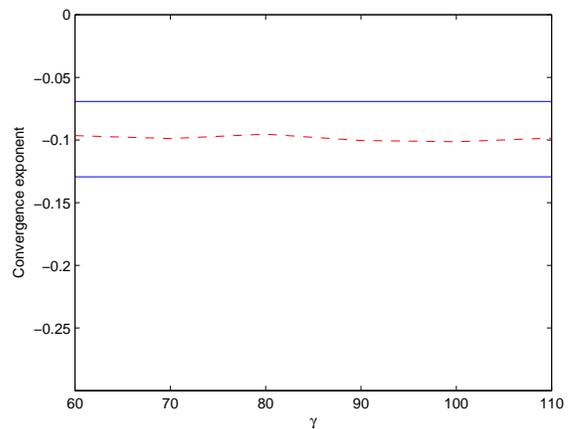}\\
  \caption{Second-layer large deviation analysis in Theorem~\ref{thm_second}: simulated decay rate (dash curve) of the probability of the rare event that not all observations are available at a sensor, and the corresponding large deviation upper and lower bounds (solid lines)}\label{upper_lower_bound}
\end{figure}

\begin{figure}
  \centering
  \includegraphics[width=0.45\textwidth]{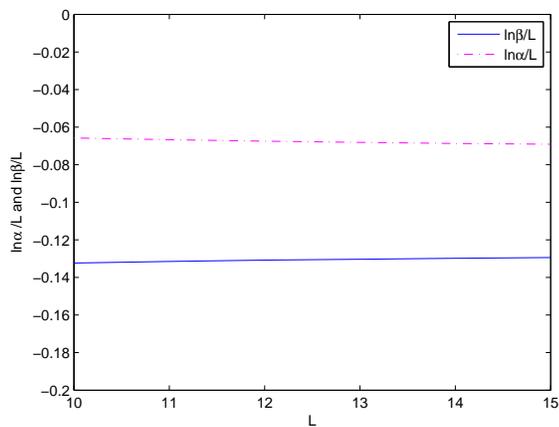}\\
  \caption{Calculation of the lower bound $\ln \beta /L$ and the upper bound $\ln \alpha /L$ in Theorem~\ref{thm_second} with varying $L$.}\label{upper_lower_bound_calculation}
\end{figure}

\begin{figure}
  \centering
  \includegraphics[width=0.45\textwidth]{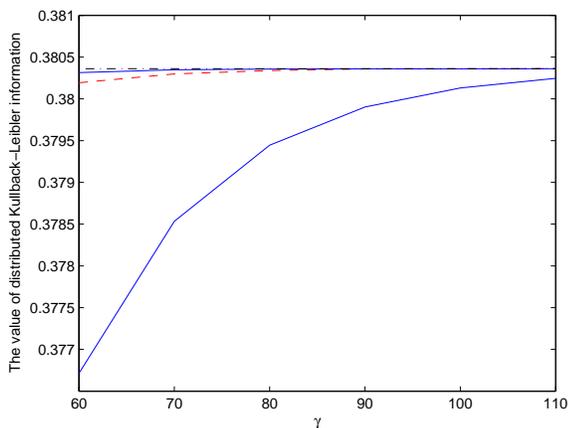}\\
  \caption{Simulated distributed Kullback-Leibler information (dash curve), centralized Kullback-Leibler information (dash-dot line) and the corresponding analytical upper and lower bounds (solid curve) in Theorem~\ref{thm_third}.}\label{upper_lower_bound_D}
\end{figure}

\section{Conclusion}\label{conclusion}
We have proposed a distributed Bayesian change detection scheme, with a random gossip-type protocol to realize inter-sensor communications. With this communication structure, we modeled the information propagation procedure in the network as a Markov process. We analyzed the performance of the proposed scheme via a method of two-layer large deviation analysis. The first-layer analysis proves that the probability of false alarm decays to zero at an exponentially fast rate, as the conditional averaged detection delay increases, and also shows that the Kullback-Leibler information number is a key factor determining the performance of the change detection algorithm. The second-layer analysis proves that the probability of the rare event that not all observations are available at a sensor decays exponentially fast to zero, as the averaged number of communications increases, where the large deviation upper and lower bounds of this decay rate are also derived. Then, we eventually prove that the performance of the distributed algorithm converges exponentially fast to that of the centralized one, by showing that the distributed Kullback-Leibler information number in the distributed algorithm converges to that in the centralized one.

\bibliographystyle{IEEEtran}
\bibliography{Bib_DCD,CentralBib}

\begin{thebibliography}{10}
\providecommand{\url}[1]{#1}
\csname url@samestyle\endcsname
\providecommand{\newblock}{\relax}
\providecommand{\bibinfo}[2]{#2}
\providecommand{\BIBentrySTDinterwordspacing}{\spaceskip=0pt\relax}
\providecommand{\BIBentryALTinterwordstretchfactor}{4}
\providecommand{\BIBentryALTinterwordspacing}{\spaceskip=\fontdimen2\font plus
\BIBentryALTinterwordstretchfactor\fontdimen3\font minus
  \fontdimen4\font\relax}
\providecommand{\BIBforeignlanguage}[2]{{%
\expandafter\ifx\csname l@#1\endcsname\relax
\typeout{** WARNING: IEEEtran.bst: No hyphenation pattern has been}%
\typeout{** loaded for the language `#1'. Using the pattern for}%
\typeout{** the default language instead.}%
\else
\language=\csname l@#1\endcsname
\fi
#2}}
\providecommand{\BIBdecl}{\relax}
\BIBdecl

\bibitem{Lai-QD-cognitive}
L.~Lai, Y.~Fan, and H.~Poor, ``Quickest detection in cognitive radio: A
  sequential change detection framework,'' in \emph{2008 IEEE Global
  Telecommunications Conference (GLOBECOM)}, New Orleans, USA, Nov. 2008, pp.
  1--5.

\bibitem{Husheng-Quickest}
H.~Li, H.~Dai, and C.~Li, ``Collaborative quickest spectrum sensing via random
  broadcast in cognitive radio systems,'' \emph{IEEE Transactions on Wireless
  Communications}, vol.~9, no.~7, pp. 2338--2348, July 2010.

\bibitem{Trivedi-secret}
S.~Trivedi and R.~Chandramouli, ``Secret key estimation in sequential
  steganography,'' \emph{IEEE Transactions on Signal Processing}, vol.~53,
  no.~2, pp. 746--757, Feb. 2005.

\bibitem{Thottan-Anomaly}
M.~Thottan and C.~Ji, ``Anomaly detection in {IP} networks,'' \emph{IEEE
  Transactions on Signal Processing}, vol.~51, no.~8, pp. 2191--2204, Aug.
  2003.

\bibitem{Tar-intrusion}
A.~Tartakovsky, B.~Rozovskii, R.~Blazek, and H.~Kim, ``A novel approach to
  detection of intrusions in computer networks via adaptive sequential and
  batch-sequential change-point detection methods,'' \emph{IEEE Transactions on
  Signal Processing}, vol.~54, no.~9, pp. 3372--3382, Sep. 2006.

\bibitem{Cardenas-Mac}
A.~Cardenas, S.~Radosavac, and J.~Baras, ``Evaluation of detection algorithms
  for mac layer misbehavior: Theory and experiments,'' \emph{IEEE/ACM
  Transactions on Networking}, vol.~17, no.~2, pp. 605--617, Apr. 2009.

\bibitem{Commenges-neuro}
D.~Commenges, J.~Seal, and F.~Pinatel, ``Inference about a change point in
  experimental neurophysiology,'' \emph{Mathematical Biosciences}, vol.~80,
  no.~1, pp. 81--108, July 1986.

\bibitem{Frisen-public}
M.~Frisen, ``Optimal sequential surveillance for finance, public health, and
  other areas,'' \emph{Sequential Analysis}, vol.~28, no.~3, pp. 310--337, July
  2009.

\bibitem{Sonesson-public}
C.~Sonesson and D.~Bock, ``A review and discussion of prospective statistical
  surveillance in public health,'' \emph{Journal of the Royal Statistical
  Society}, vol. 166, no.~1, pp. 5--21, July 2003.

\bibitem{Rice-structure}
J.~A. Rice, K.~Mechitov, S.~Sim, T.~Nagayama, S.~Jang, R.~Kim, B.~F. Spencer,
  G.~Agha, and Y.~Fujino, ``Flexible smart sensor framework for autonomous
  structural health monitoring,'' \emph{Smart Structures and Systems}, vol.~6,
  no.~5, pp. 423--438, May 2010.

\bibitem{Mainwaring}
A.~Mainwaring, D.~Culler, J.~Polastre, R.~Szewczyk, and J.~Anderson, ``Wireless
  sensor networks for habitat monitoring,'' in \emph{1st ACM International
  Workshop on Wireless Sensor Networks and Applications}, Atlanta, USA, Sep.
  2002, pp. 88--97.

\bibitem{VVV-EnergyEfficient}
T.~Banerjee and V.~Veeravalli, ``Energy-efficient quickest change detection in
  sensor networks,'' in \emph{2012 IEEE Statistical Signal Processing Workshop
  (SSP)}, Ann Arbor, USA, Aug. 2012, pp. 636--639.

\bibitem{Mei}
Y.~Mei, ``Quickest detection in censoring sensor networks,'' in \emph{2011 IEEE
  International Symposium on Information Theory Proceedings (ISIT)}, St.
  Petersburg, Russia, July 2011, pp. 2148--2152.

\bibitem{Tartakovsky-quickest03}
A.~G. Tartakovsky and V.~V. Veeravalli, ``Quickest change detection in
  distributed sensor systems,'' in \emph{6th IEEE International Conference on
  Information Fusion}, Cairns, Australia, July 2003, pp. 756--763.

\bibitem{Tartakovsky08asymptoticallyoptimal}
------, ``Asymptotically optimal quickest change detection in distributed
  sensor systems,'' \emph{Sequential Analysis}, vol.~27, no.~4, pp. 441--475,
  Oct. 2008.

\bibitem{VVV-decentr2001}
V.~Veeravalli, ``Decentralized quickest change detection,'' \emph{IEEE
  Transactions on Information Theory}, vol.~47, no.~4, pp. 1657--1665, May
  2001.

\bibitem{Poor-oneshot}
O.~Hadjiliadis, H.~Zhang, and H.~Poor, ``One shot schemes for decentralized
  quickest change detection,'' \emph{IEEE Transactions on Information Theory},
  vol.~55, no.~7, pp. 3346--3359, July 2009.

\bibitem{Mous-decentralized}
G.~Moustakides, ``Decentralized {CUSUM} change detection,'' in \emph{9th
  International Conference on Information Fusion}, Florence, Italy, July 2006,
  pp. 1--6.

\bibitem{LZ-decentra}
L.~Zacharias and R.~Sundaresan, ``Decentralized sequential change detection
  using physical layer fusion,'' \emph{IEEE Transactions on Wireless
  Communications}, vol.~7, no.~12, pp. 4999--5008, Dec. 2008.

\bibitem{Di-GlobalSIP}
D.~Li, L.~Lai, and S.~Cui, ``Quickest change detection and identification
  across a sensor array,'' in \emph{2013 IEEE Global Conference on Signal and
  Information Processing (GlobalSIP)}, Austin,~USA, Dec. 2013, pp. 145--148.

\bibitem{Ban-Efficiency-quickest}
T.~Banerjee, V.~Sharma, V.~Kavitha, and A.~JayaPrakasam, ``Generalized analysis
  of a distributed energy efficient algorithm for change detection,''
  \emph{IEEE Transactions on Wireless Communications}, vol.~10, no.~1, pp.
  91--101, Jan. 2011.

\bibitem{Braca-distributedchangeconsensus}
P.~Braca, S.~Marano, V.~Matta, and P.~Willett, ``Consensus-based {P}age's test
  in sensor networks,'' \emph{Signal Processing}, vol.~91, pp. 919--930, Apr.
  2011.

\bibitem{stankovic-distributedchange}
S.~S. Stankovic, N.~Ilic, M.~S. Stankovic, and K.~H. Johansson, ``Distributed
  change detection based on a consensus algorithm,'' \emph{IEEE Transactions on
  Signal Processing}, vol.~59, no.~12, pp. 5686--5697, Dec. 2011.

\bibitem{Di-LDKalman}
D.~Li, S.~Kar, J.~Moura, H.~Poor, and S.~Cui, ``Distributed {K}alman filtering
  over massive data sets: Analysis through large deviations of random riccati
  equations,'' \emph{IEEE Transactions on Information Theory}, vol.~61, no.~3,
  pp. 1351--1372, Mar. 2015.

\bibitem{Di-KalmanQuanize}
D.~Li, S.~Kar, F.~Alsaadi, A.~Dobaie, and S.~Cui, ``Distributed {K}alman
  filtering with quantized sensing state,'' \emph{IEEE Transactions on Signal
  Processing}, vol.~63, no.~19, pp. 5180--5193, Oct. 2015.

\bibitem{DEMBO-LD}
A.~Dembo and O.~Zeitouni, \emph{Large Deviations Techniques and
  Applications}.\hskip 1em plus 0.5em minus 0.4em\relax New York: Springer,
  1998.

\bibitem{Bucklew-book}
J.~Bucklew, \emph{Large {D}eviation {T}echniques in {D}ecision, {S}imulation,
  and {E}stimation}.\hskip 1em plus 0.5em minus 0.4em\relax New York: Wiley,
  1990.

\bibitem{Baj-LD}
D.~Bajovic, D.~Jakovetic, J.~Xavier, B.~Sinopoli, and J.~Moura, ``Distributed
  detection via gaussian running consensus: Large deviations asymptotic
  analysis,'' \emph{IEEE Transactions on Signal Processing}, vol.~59, no.~9,
  pp. 4381--4396, Sep. 2011.

\bibitem{Jak-Detection-LD}
D.~Jakoveti\'c, J.~Moura, and J.~Xavier, ``Distributed detection over noisy
  networks: Large deviations analysis,'' \emph{IEEE Transactions on Signal
  Processing}, vol.~60, no.~8, pp. 4306--4320, Aug. 2012.

\bibitem{Sahu-SPRT-TSP-15}
\BIBentryALTinterwordspacing
A.~K. Sahu and S.~Kar, ``Distributed sequential detection for gaussian binary
  hypothesis testing,'' \emph{IEEE Transactions on Signal Processing}, to
  appear. [Online]. Available: \url{http://arxiv.org/pdf/1411.7716v2.pdf}
\BIBentrySTDinterwordspacing

\bibitem{Poor}
H.~V. Poor and O.~Hadjiliadis, \emph{Quickest Detection}.\hskip 1em plus 0.5em
  minus 0.4em\relax Cambridge,~UK: Cambridge University Press, 2008.

\bibitem{Tart-Bayesian}
A.~G. Tartakovsky and V.~V. Veeravalli, ``General asymptotic {B}ayesian theory
  of quickest change detection,'' \emph{Theory of Probability and Its
  Applications}, vol.~49, no.~3, pp. 458--497, 2005.

\bibitem{Shiryaev-1963}
A.~N. Shiryaev, ``On optimum methods in quickest detection problems,''
  \emph{Theory of Probability and Its Applications}, vol.~8, no.~1, pp. 22--46,
  1963.

\bibitem{Shiryaev-1978}
------, \emph{Optimal Stopping Rules}.\hskip 1em plus 0.5em minus 0.4em\relax
  New York: Springer-Verlag, 1978.

\bibitem{Boyd-Gossip}
S.~Boyd, A.~Ghosh, B.~Prabhakar, and D.~Shah, ``Randomized gossip algorithms,''
  \emph{IEEE Transactions on Information Theory}, vol.~52, no.~6, pp.
  2508--2530, June 2006.

\bibitem{Mckeown}
N.~McKeown, A.~Mekkittikul, V.~Anantharam, and J.~Walrand, ``Achieving 100~\%
  throughput in an input-queued switch,'' \emph{IEEE Transactions on
  Communications}, vol.~47, no.~8, pp. 1260--1267, Aug. 1999.

\bibitem{Woodroofe}
M.~Woodroofe, \emph{Nonlinear Renewable Theory in Sequential Analysis}.\hskip
  1em plus 0.5em minus 0.4em\relax Philadelphia, PA: SIAM, 1982.

\bibitem{Chow}
Y.~S. Chow and T.~L. Lai, ``Some one-sided theorems on the tail distribution of
  sample sums with applications to the last time and largest excess of boundary
  crossings,'' \emph{Transactions of the American Mathematical Society}, vol.
  208, pp. 51--72, July 1975.

\bibitem{Driver-Markov}
B.~K. Driver, ``Introduction to stochastic processes {II},'' [Online].
  Available:~\url{http://www.math.ucsd.edu/~bdriver/math180C_S2011/Lecture%20Notes/180Lec6b.pdf}.

\end{thebibliography}
\end{document}